\newfont{\mycrnotice}{ptmr8t at 7pt}
\newfont{\myconfname}{ptmri8t at 7pt}
\newcommand{\unsafe}{\num U}
\newcommand{\findldf}{\mathit{ComputeLDF}}
\theoremstyle{plain}
\newtheorem{theorem}{Theorem}[section]
\theoremstyle{definition}
\newtheorem{definition}[theorem]{Definition}
\newtheorem{lemma}[theorem]{Lemma}
\newtheorem{corollary}[theorem]{Corollary}
\title{Bounded Verification with On-the-Fly Discrepancy Computation\thanks{We gratefully acknowledge the feedback from anonymous referees on a previous draft of this technical report. The results presented here came about from work supported and funded by the National Science Foundation (grant: CAR 1054247 and NSF CSR 1016791) and the Air Force Office of Scientific Research (AFOSR YIP FA9550-12-1-0336).}}
\author{Chuchu Fan  and Sayan Mitra \\ 
{\sf \{cfan10, mitras\}@illinois.edu} \\
Coordinated Science Laboratory\\ 
University of Illinois at Urbana Champaign \\ 
Urbana, IL 61801}
\begin{document}
\maketitle
\begin{abstract}
Simulation-based verification algorithms can provide formal safety guarantees for nonlinear and hybrid systems. The previous algorithms rely on user provided model annotations called discrepancy function, which are crucial for computing reachtubes from simulations. In this paper, we eliminate this requirement by presenting an algorithm for computing piece-wise exponential discrepancy functions. The algorithm relies on computing local convergence or divergence rates of trajectories along a simulation using a coarse over-approximation of the reach set and bounding the maximal eigenvalue of the Jacobian over this over-approximation. The resulting discrepancy function preserves the soundness and the relative completeness of the verification algorithm. We also provide a coordinate transformation method to improve the local estimates for the convergence or divergence rates in practical examples. We extend the method to get the input-to-state discrepancy of nonlinear dynamical systems which can be used for compositional analysis. Our experiments show that the approach is effective in terms of running time for several benchmark problems, scales reasonably to larger dimensional systems, and compares favorably with respect to available tools for nonlinear models.
\end{abstract}

\section{Introduction}
\label{sec:intro}

Verifying and falsifying nonlinear, switched, and hybrid system models using numerical simulations have been studied in detail~\cite{DMVemsoft2013,HuangM:HSCC2014,annpureddy2011s,girard2010approximately,donze2007systematic}. The bounded time safety verification problem for a given model is parameterized by a time bound, a set of initial states, and a set of unsafe states and it requires one to decide if there exists a behavior of the model that reaches any unsafe set from any initial state. The simulation-based procedure for this problem first generates a set of numerical approximations of the behaviors from a finite sampling of the initial states. Next, by bloating these simulations by an appropriately large factor it computes an over-approximation of the reachable states from the initial set. If this over-approximation  proves safety or produces a counter-example, then the algorithm decides, otherwise, it draws more samples of initial states and repeats the earlier steps to compute more precise over-approximation.
With post-processing of the reachtube over-approximations this basic procedure can be utilized to verify termporal precedence~\cite{DuggiralaWMVM14} and richer classes of properties~\cite{dang2008sensitive}. 

In order to make this type of procedure sound, the bloating factor should be chosen to be large.
Specifically, it should be large enough to make each bloated simulation an over approximation of the reachable states of the system not only from the sampled initial state, but also from a large enough neighborhood of that state so that the union of these neighborhoods cover the entire set of initial states. 
On the other hand, to make the procedure complete, or at least relatively complete modulo the precision of the machine, it should be possible to make the error due to bloating arbitrarily small for any point in time.
These two opposing requirements are captured in the definition of a {\em discrepancy function\/} of~\cite{DMVemsoft2013}: For an $n$-dimensional dynamical system, it is any function $\beta:\reals^{2n} \times \nnreals \rightarrow \nnreals$, such that (a) it gives an upper-bound on the distance between any two trajectories $\xi(x,t)$  and $\xi'(x,t)$ of the system  $|\xi(x,t) - \xi(x',t)| \leq \beta(x,x',t)$, and (b) it vanishes as $x$ approaches $x'$.  
Simply using the Lipschitz constant of the dynamic function gives one such bound, but it grows exponentially with time even for some incrementally stable models~\cite{angeli2002lyapunov}.

In~\cite{DMVemsoft2013}, it is observed that the notion of a contraction metric~\cite{slotine98} gives a much tighter bound and it provided heuristics for finding them for some classes of polynomial systems.
Sensitivity analysis approach gives strict error bounds for linear systems~\cite{donze2007systematic}, but for nonlinear models the bounds are less clear. We present a more detailed overview of related work in Section~\ref{sec:related}.
This paper fills this gap by providing a subroutine that computes  a local version of the discrepancy function which turns out to be adequate and effective for sound and relatively complete simulation-based verification. 
This subroutine, $\findldf$, itself uses a Lipschitz constant and the Jacobian of the dynamic function (the right hand side of the differential equation)  and simulations of the system. The Lispchitz constant is used to construct a coarse, one-step over-approximation of the reach set of the system along a simulation. Then it computes an upper bound on the maximum eigenvalue of the symmetric part of the Jacobian over this over approximation, using a theorem from matrix perturbation theory.
This gives an exponential bound on the distance between two trajectories, but roughly, the exponent is the best it can be as it is close to the maximum eigenvalue of the linear approximation of the system in the neighborhood.

We propose two practical extensions of this approach. 
First, we show that a linear coordinate transformation can bring about exponential improvements in the estimated distance. Secondly, we propose a technique for computing input-to-state discrepancy functions for analyzing composed systems and systems with bounded nondeterministic inputs. 
%
We report the results from a number of experiments performed with a prototype implementation of this approach applied to safety verification.



\section{Background}
\label{sec:prelims}

\subsection{Notations}
For a vector $x \in \reals^n$, $\|x\|$ is the $l^2$-norm of $x$
and $x_i$ denotes its $i^{th}$ component.
For $\delta \geq 0$, $B_\delta(x) = \{x' \in \reals^n  \ | \ ||x' - x || \leq \delta \}.$
For a set $S \subseteq \reals^n$, $B_{\delta}(S) = \cup_{x \in S}B_\delta(x)$.
Let $S \oplus B_{\delta}(0)$ represents the Minkowski sum of $S$ and $B_{\delta}(0)$.
Therefore,  $S \oplus B_\delta(0) = B_\delta(S)$. 
For sets $S_1,S_2 \subseteq \reals^n$, $hull(S_1,S_2)$ is their convex hull.
The diameter of a compact set $S$ is $dia(S) = \sup_{x_1,x_2 \in S}\|x_1-x_2\|$.

A continuous function $f : \reals^n \rightarrow \reals$ is {\em smooth\/} if all its higher derivatives and partial derivatives exist and are also continuous. It has a Lipschitz constant $L\geq 0$ if for every  $x,x' \in \reals^n$, $||f(x) - f(x')|| \leq L||x-x'||$.
%
%
A function $f:\nnreals \rightarrow \nnreals$ is a {\em class $\K$ function\/} if it is continuous, strictly increasing, and $f(0)=0$. 

We denote the transpose of a matrix $A$ by $A^T$. 
The {\em conjugated transpose\/} of $A$ is the matrix $A^H$ obtained by  replacing each entry in $A^T$  with its complex conjugate.

Given a differentiable vector-valued function $f: \mathbb{R}^n \times \mathbb{R}^{\geq 0} \rightarrow \mathbb{R}^n$, the {\em Jacobian ${J}_f$\/} of $f$ is the matrix-valued function of all the first-order partial derivatives of $f$. Let $f_i, i=1\dots n: \reals^n \rightarrow \nnreals$ be the scalar components of $f$. The Jacobian of $f$ is: $(J_{f}(x))_{ij} = \frac{\partial f_i(x)}{\partial x_j}$.
The {\em symmetric part of the Jacobian of $f$\/} matrix is defined as $\frac{1}{2}({J_f(x)}+{J_f}^T(x))$.

%

For an $n \times n$ matrix $A$, $\|A\|$ represents the $l^2$-norm of $A$: $\|A\| = \sqrt{\lambda_{\textrm{max}}(A^HA)}$. If $\forall x \in \reals^n$, $x^TAx \leq 0$, then we say $A$ is negative-semidefinite, and write $A \preceq 0$. We write $A \preceq B$ if $A-B \preceq 0$.

\subsection{Safety Verification Problem}
\label{sec:dynamical}

Consider an $n$-dimensional {\em autonomous dynamical system\/}:
\begin{eqnarray} \label{eq:dynamic system}
\dot{x} = f(x),
\label{eq:system}
\end{eqnarray}
where $f:\reals^n \rightarrow \reals^n$ is a Lipschitz continuous function.
A {\em solution or a trajectory\/} of the system is a function $\xi:\reals^n \times \nnreals \rightarrow \reals^n$ such that for any initial point $x_0 \in \reals^n$ and at any time $t >0$,  $\xi(x_0, t)$ satisfies the differential equation~(\ref{eq:system}).

The {\em bounded-time safety verification problem\/} is parameterized by:
\begin{inparaenum}[(a)]
\item an $n$-dimensional dynamical system, that is, the function $f$ defining the right hand side of its differential equation,
\item a compact set $\Theta \subseteq \reals^n$  of initial states,
\item an open set $\unsafe \subseteq \reals^n$ of unsafe states, and
\item a time bound $T > 0$.
\end{inparaenum}
A state $x$ in $\reals^n$ is {\em reachable from $\Theta$ within a time interval $[t_1, t_2]$\/} if there exists an initial state $x_0 \in \Theta$ and a time $t \in [t_1,t_2]$ such that $x =\xi(x_0,t)$. The set of all reachable states in the interval $[t_1,t_2]$ is denoted by $\reach{}(\Theta,[t_1,t_2])$. If $t_1 =0$ then we write $\reach{}(t_2)$ when set $\Theta$ is clear from the context.
Given a bounded-time safety verification problem, we would like to design algorithms for deciding if any reachable state is safe, that is, if $\reach{}(T) \cap \unsafe = \emptyset$. If there exists some $\epsilon>0$ such that $B_{\epsilon}(\reach{}(T)) \cap \unsafe = \emptyset$, we say the system is robustly safe. 
A sequence of papers~\cite{DMVemsoft2013,DuggiralaWMVM14,donze2007systematic} presented algorithms for solving this problem for a broad class of nonlinear dynamical, switched, and hybrid systems. In the remainder of this section, we present an overview of this approach. (Figure~\ref{fig:algo1}).

\subsection{Simulations, Reachtubes and Annotations}
\label{ssec:simann}
The algorithm uses simulation oracles that give sampled numerical simulations of the system from individual initial states.
%
\begin{definition} \label{def:simulation}
A {\em $(x_0, \tau, \epsilon,T)$-simulation\/} of the system described in Equation~(\ref{eq:system}) is a sequence of time-stamped sets  $(R_0, t_0)$, $(R_1,t_1) \ldots, (R_n,t_n)$ satisfying:
\begin{enumerate}[(1)]
\item Each $R_i$ is a compact set in $\reals^n$ with $\mathit{dia}(R_i) \leq \epsilon$.
\item The last time $t_n = T$ and for each $i$, $0 < t_i - t_{i-1} \leq \tau$, where the parameter $\tau$ is  called the {\em sampling period\/}.
\item For each $t_i$, the trajectory from $x_0$ at $t_i$ is in $R_i$, i.e., $\xi(x_0,t_i) \in R_i$, and 
for any $t \in [t_{i-1}, t_i]$, the solution $\xi(x_0,t) \in hull(R_{i-1},R_i)$.
\end{enumerate}
\end{definition}
Simulation engines generate a sequence of states and error bounds using numerical integration.
Libraries like CAPD~\cite{capd} and VNODE-LP~\cite{vnode2006} compute such simulations
for a wide range of nonlinear dynamical system models and 
the $R_i$'s are represented by some data structure like hyperrectangles.

Closely related to simulations are {\em reachtubes\/}.
For a set of states $D\subseteq \reals^n$,
a {\em $(D, \tau, T)$-reachtube\/} of~(\ref{eq:system}) is a sequence of time-stamped sets $(R_0, 0), (R_1,t_1) \ldots, (R_n,t_n)$ satisfying:
\begin{enumerate}[(1)]
\item Each $R_i \subseteq \reals^n$ is a compact set of states.
\item The last time $t_n = T$ and for each $i$, $0 \leq t_i - t_{i-1} \leq \tau$.
\item For any $x_0 \in D$, and any time $t \in [t_{i-1}, t_i]$, the solution $\xi(x_0,t) \in R_i$.
\end{enumerate}
A reachtube is analogous to a simulation from a set of states, but they are much harder to compute.
In fact, an algorithm for computing exact reachtubes readily solves the safety verification problem.

The algorithms in~\cite{DMVemsoft2013,HuangFMMK14} require the user to decorate the model with annotations called {\em discrepancy functions\/} for computing reachtubes.
\begin{definition}
\label{def:disc}
A continuous function $\beta:\reals^n \times \reals^n \times \nnreals \rightarrow \nnreals$ is a {\em discrepancy function\/} of the system in Equation~(\ref{eq:system}) if
\begin{enumerate}[(1)]
\item for any pair of states $x, x' \in \reals^n$, and any time $t >0$,
\begin{eqnarray}
\|\xi(x,t) - \xi(x',t)\| \leq \beta(x,x',t), \mbox{and}
\label{eq:df1}
\end{eqnarray}
\item for any $t$, as $x \rightarrow x'$, $\beta(.,.,t) \rightarrow 0$,
\chuchu{\item $\forall \epsilon>0,  \forall x,x' \in \reals^n, \exists \delta$ such that for any time $t, \|x-x'\|<\delta \Rightarrow \beta(x,x',t)<\epsilon$.}
\end{enumerate}
\end{definition}

If the function $\beta$ meets the two conditions for any pair of states $x,x'$ in a compact set $K$ \chuchu{in both condition (1) and (3),} then it is called a {\em $K$-local discrepancy function.}  

The annotation $\beta$ gives an upper bound on the distance between two neighboring trajectories as a function of their initial states and time. Unlike incremental stability conditions~\cite{angeli2002lyapunov}, the second condition on $\beta$ does not require the trajectories to converge as time goes to infinity, but only as the initial states converge. Obviously, if the function $f$ has a Lipschitz constant $L$, then $\beta(x,x',t) = ||x-x'||e^{Lt}$ meets the above criteria. In~\cite{DMVemsoft2013,HuangFMMK14} other heuristics have been proposed for finding discrepancy functions. As will be clear from the following discussion, the quality of the discrepancy function strongly influences the performance of the simulation-based verification algorithm. \cite{DMVemsoft2013,HuangFMMK14,HuangM:HSCC2014} need user provided discrepancy and simulation engines to give verification of bounded time safety and temporal precedence properties.
In this paper, we will present approaches for computing {\em local discrepancy functions\/} that unburdens the user from finding these annotations.

\subsection{Verification Algorithm}
\label{ssec:simverialgo}

The simulation-based verification algorithm  is shown in Figure~\ref{fig:algo1}.
It takes as input some finite description of the parameters of a safety verification problem, namely,
the function $f$,
the initial set $\Theta$, the unsafe
set $\unsafe$, and the time bound $T$.
It has two main data stuctures:
The first, $\C$ returned by function $\mathit{Partition}$, is a collection of triples $\langle \theta, \delta, \epsilon \rangle$
such that the union of all the $\delta$-balls around the $\theta$'s completely cover the initial set $\Theta$.
The second data structure $\R$ incrementally gets the
bounded-time reachtube from $\Theta$.

Initially, $\C$ has a singleton cover $\langle \theta_0,\delta_0,\epsilon_0\rangle$
such that $\delta_0 = \mathit{dia}(\Theta)$, $\Theta \subseteq B_{\delta_0}(\theta_0)$,
and $\epsilon_0$ is a small constant for simulation precision.

In the {\bf while\/}-loop, this verification algorithm iteratively refines the cover of $\Theta$  and for each $\langle \theta,\delta, \epsilon\rangle$ in $\C$, computes
over-approximations of the reachtube from $B_\delta(\theta)$.
The higher-level structure of the algorithm is familiar:
if the reachtube from $B_\delta(\theta)$ proves to be safe, i.e., disjoint from $\unsafe$, then
the corresponding triple is removed from $\C$ (\lnref{al1:subsafe}).
If part of the reachtube from $B_\delta(\theta)$ overlaps with $\unsafe$, then
the system is declared to be unsafe (\lnref{al1:unsafe}).
Otherwise, a  finer cover of $B_\delta(\theta)$ is created, and the corresponding triples
with finer parameters are added to $\C$.

Here we discuss the reachtubes computed from discrepancy and simulations. For each $\langle \theta,\delta, \epsilon\rangle$ in $\C$,
a $(\theta, \tau, \epsilon, T)$-simulation $\psi$, which is a sequence of $\{(R_i,t_i)\}$, is generated.
Note that $\psi$ contains the trajectory from $\theta$, $\xi(\theta, t), t\in[0, T]$.
Then we bloat each $R_i$ by some factor (\lnref{al1:bloat}) such that the resulting sequence contains the reachtube from $B_\delta(\theta)$. It is shown that this bloated simulation is guaranteed to be an over-approximation of $\reach{}(B_{\delta}(\theta),T)$ and the union of these bloated simulations is an over-approximation of $\reach{}(\Theta,T)$. Therefore, the algorithm is sound. Furthermore, the second property of $\beta$ ensures that the reach set over-approximations become tighter and tighter as we make $\delta$ smaller and smaller. Finally it will return ``SAFE" for robustly safe reachtubes or find a counter example and return ``UNSAFE".
For user defined discrepancy function, the factor is obtained by maximizing $\beta(\theta,\tilde{\theta},t)$
over $\tilde{\theta} \in B_\delta(\theta)$ and $t \in [t_{i-1},t_i]$.

Indeed this is the approach taken in the algorithm presented in~\cite{DMVemsoft2013}.
In this paper, we will analyze in detail the $\findldf$ subroutine which computes a local version of discrepancy function automatically.

\begin{figure}
\centering
\begin{algorithmic}[1]
\STATE{{\bf Input:}{$\Theta,\unsafe,T$}}
\STATE{$\delta \gets \mathit{dia}(\Theta); \epsilon\gets \epsilon_0; \C \gets \emptyset, \R\gets \emptyset$; //$\epsilon_0$ is a small constant\;}
\STATE{$\C   \gets \langle \mathit{Partition}(\Theta,\delta),\delta,\epsilon \rangle$\;}   \lnlabel{al1:partition}
\WHILE{ $\C \neq \emptyset $
}{ 	
	\FOR{$(\theta,\delta,\epsilon)\in \C$}{
		\STATE{$\psi \gets \mathit{Simulate}(\theta, \tau, \epsilon,T)$} \label{al1:simulation}
		\STATE{$\beta \gets$ $\findldf$($\psi,J_f,L_f,\delta,\epsilon$)\;} \lnlabel{al1:beta}
        \STATE{D $\gets$ $\psi \oplus \beta$} \lnlabel{al1:bloat}
		\IF{$D \cap \unsafe =\emptyset$}
			\STATE{$\C\gets \C\backslash\{(\theta,\delta,\epsilon)\} ; \R \gets \R \cup D$ } \lnlabel{al1:subsafe}
		\ELSIF{$\exists k, R_k \subseteq \unsafe$}
        {\RETURN $(\mbox{UNSAFE},\R)$} \lnlabel{al1:unsafe}
		\ELSE{
		\STATE{$\C \gets \C \backslash \{(\theta,\delta,\epsilon)\}$\;}	
		\STATE{$ \C \gets \C \cup \mathit{Partition}(\Theta\cap B_\delta(\theta),(\frac{\delta_1}{2},\dots,\frac{\delta_N}{2}),\frac{\epsilon}{2})$\;}
		}
        \ENDIF
	}
    \ENDFOR
}
\ENDWHILE
\RETURN $(\mbox{SAFE},\R)$\;
\end{algorithmic}
\caption{Verification Algorithm}
\label{fig:algo1}
\end{figure}

The following results from \cite{DMVemsoft2013} state two key properties of the algorithm. Although in \cite{DMVemsoft2013} $\beta$ was defined globally, it is easy to check that the local version still satisfies them.

\begin{theorem}
The Algorithm in Fig.\ref{fig:algo1} is sound, that is, if it returns ``SAFE" then the system is safe; when it returns ``UNSAFE" there exists at least one execution from $\Theta$ that is unsafe. The Algorithm is relatively complete, that is, if the system is robustly safe, the algorithm will terminate and return ``SAFE". If any executions from $\Theta$ is unsafe, it will terminate and return ``UNSAFE".
\end{theorem}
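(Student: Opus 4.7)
The plan is to split the statement into four claims---SAFE-soundness, UNSAFE-soundness, SAFE-completeness, UNSAFE-completeness---and address each by isolating which facet of the algorithm is being invoked. The easiest is UNSAFE-soundness: the return at \lnref{al1:unsafe} fires only when some simulation frame $R_k \subseteq \unsafe$, and \defref{simulation}(3) guarantees that the sampled trajectory point $\xi(\theta,t_k)$ lies in $R_k$, so $\theta \in \Theta$ is a concrete witness to an unsafe execution. For SAFE-soundness I would maintain the loop invariant that $\R \cup \bigcup_{(\theta,\delta,\epsilon)\in\C} \reach{}(B_\delta(\theta),T)$ always over-approximates $\reach{}(\Theta,T)$; this holds initially since $\Theta \subseteq B_{\delta_0}(\theta_0)$ and is preserved by $\mathit{Partition}$. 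The core step is an over-approximation lemma showing that the bloated tube $D=\psi\oplus\beta$ built at \lnref{al1:bloat} contains $\reach{}(B_\delta(\theta),T)$: combine \defref{simulation}(3), which places $\xi(\theta,t)$ inside $hull(R_{i-1},R_i)$ for $t \in [t_{i-1},t_i]$, with property~(\ref{eq:df1}) of the local discrepancy, which bounds $\|\xi(x,t)-\xi(\theta,t)\|$ by $\beta(\theta,x,t)$ for every $x \in B_\delta(\theta)$. If the loop empties $\C$ only through \lnref{al1:subsafe}, every such $D$ is disjoint from $\unsafe$, hence $\reach{}(\Theta,T) \cap \unsafe = \emptyset$.

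For the robust-safety direction of completeness, I would fix $\epsilon^* > 0$ with $B_{\epsilon^*}(\reach{}(\Theta,T)) \cap \unsafe = \emptyset$ and exploit the vanishing property of the local discrepancy: for each $\theta$ one can choose $\delta(\theta) > 0$ such that $\beta(\theta,\cdot,t) < \epsilon^*/2$ uniformly in $t \in [0,T]$ on $B_{\delta(\theta)}(\theta)$, while the simulation precision $\epsilon$ can be driven below $\epsilon^*/2$ since it is halved at each refinement. Compactness of $\Theta$ then forces the repeated halving in $\mathit{Partition}$ to produce, after finitely many iterations, a cover whose bloated tubes all lie inside $B_{\epsilon^*}(\reach{}(\Theta,T))$, at which point every triple is discharged through \lnref{al1:subsafe} and the algorithm terminates with SAFE. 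A standard König-style compactness argument on the refinement tree rules out the possibility of an infinite refinement branch.

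The main obstacle is UNSAFE-completeness, because the algorithm flags UNSAFE only when an entire frame $R_k \subseteq \unsafe$---it cannot detect a trajectory that merely grazes $\partial\unsafe$. I would therefore read the hypothesis ``some execution from $\Theta$ is unsafe'' through the openness of $\unsafe$: any such execution $\xi(x_0,\cdot)$ actually enters $\unsafe$ on an open time interval with positive margin from its boundary. Continuity of $\xi$ in initial data and time then guarantees that for $\theta$ close enough to $x_0$ and for $\epsilon$ small enough, the sampled set $R_k$ at an appropriate $t_k$ sits entirely inside $\unsafe$. The refinement loop produces sample points $\theta$ that become arbitrarily dense in $\Theta$, so eventually such a sample is selected and \lnref{al1:unsafe} fires. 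To formalize, I would argue by contradiction: if the algorithm never returns UNSAFE, then by SAFE-soundness it cannot return SAFE either (a genuine unsafe execution exists), so it must refine forever; a compactness argument on the nested covers produces a limit point at which the continuity reasoning above yields the missing detection, contradicting the assumption.
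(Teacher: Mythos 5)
You should first note what the paper actually does with this theorem: it does not prove it. It quotes the result from~\cite{DMVemsoft2013} and only remarks that, although $\beta$ was global there, ``it is easy to check that the local version still satisfies'' the required properties; the paper-specific obligations behind that remark are discharged later by Theorem~\ref{theorem: main} (the array $b$ returned by $\findldf$ is a valid $B_\delta(R_0)$-local discrepancy) and Lemma~\ref{lemma:inductive} (the bloated simulation at \lnref{al1:bloat} contains the reachtube). So your proposal reconstructs an argument the paper leaves to the citation. Its skeleton is the standard one and matches that source: the four-way split, the covering invariant over $\C$, UNSAFE-soundness from Definition~\ref{def:simulation}(3), and SAFE-soundness from the over-approximation step combining Definition~\ref{def:simulation}(3) with property~(\ref{eq:df1}) of the (local) discrepancy are all fine as sketched, and you correctly use only the $K$-local properties, which is the one point the paper itself flags.

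There are, however, two steps in your completeness arguments that would fail as written, and they share a cause: the sampling period $\tau$ is never refined in Figure~\ref{fig:algo1} (only $\delta$ and $\epsilon$ are halved). For UNSAFE-completeness, your phrase ``the sampled set $R_k$ at an appropriate $t_k$'' presumes that some sample time falls inside the time window during which the witness trajectory lies in $\unsafe$ with positive margin. An execution may enter the open set $\unsafe$ for a duration shorter than $\tau$, strictly between consecutive sample times; then no frame $R_k$ (diameter $\leq\epsilon$, anchored at $\xi(\theta,t_k)$) need ever be contained in $\unsafe$, your continuity-plus-density argument produces no detection, and the contradiction argument stalls in an infinite refinement rather than a return. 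One must either refine $\tau$, argue detection differently, or state this direction for executions that remain in $\unsafe$ long enough relative to $\tau$ --- exactly the bookkeeping the cited proof carries in its precision parameters. Similarly, for SAFE-completeness you claim that driving $\delta$ and $\epsilon$ down forces the bloated tubes inside $B_{\epsilon^*}(\reach{}(T))$, but the tube is built from $hull(R_{i-1},R_i)$ bloated by $\beta$ (Lemma~\ref{lemma:inductive}), and the hull's deviation from the true trajectory is governed by how far the trajectory moves within one sampling interval, i.e.\ by $\tau$, not by $\delta$ or $\epsilon$; for fixed $\tau$ the over-approximation error does not tend to zero, so a robustly safe system whose margin $\epsilon^*$ is smaller than this $\tau$-dependent term would not be certified by your argument without additionally shrinking $\tau$ or bounding the chord-to-arc deviation against $\epsilon^*$. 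Neither issue invalidates your overall plan, but both are genuine gaps relative to the literal statement you set out to prove.
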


\section{Local discrepancy function}\label{sec:DF}
\label{sec:local}

In this section, we present the analysis of $\findldf$ algorithm.
This algorithm computes a special type of local discrepancy in terms of time-varying exponential functions that  bound from above the distance between two trajectories starting from a compact neighborhood. Roughly speaking, it computes the rate of trajectory convergence or divergence for an interval of time instances.
\begin{definition}
\label{def:localdf}
Consider a compact set $C \subseteq \reals^n$ and a sequence of time points $0 = t_0 < t_1 < t_2 < \ldots < t_k = T$.
For $\forall x_1, x_2 \in C, \forall t\in [0, T]$, a {\em piece-wise exponential discrepancy function\/} 
$\beta: C \times C \times [0,T] \rightarrow \nnreals$ is defined as\\
$\beta(x_1,x_2,t) =$ 
\begin{eqnarray*}
\left\{ \begin{array}{ll}
\|x_1-x_2\|, & \textrm{if}~t = t_0, \\
\beta(x_1,x_2,t_{i-1})e^{b[i](t-t_{i-1})}, & \textrm{if}~t\in (t_{i-1},t_i],\\
\end{array} \right.\\
\end{eqnarray*}
where $b[1], \ldots, b[k]$ are real constants.
\end{definition}
From the definition, we can immediately get that $\beta(x_1,x_2,t) = ||x_1-x_2||e^{b[i](t-t_{i-1}) + \sum_{j=1}^{i-1} b[j] (t_j - t_{j-1})}$, $i = 1, \dots, k$, where $t_{i-1}$ is the largest time point in the sequence before $t$. 
\subsection{ComputeLDF Algorithm}
\label{ssec:findldf}
Figure \ref{alg:ComputeLDF} shows the pseudocode for $\findldf$ used in \lnref{al1:beta} of the verification algorithm. 
$\findldf$ takes as input a parameter $\delta$, an error bound for simulation $\epsilon$, the Lipschitz constant $L_f$, the Jacobian matrix $J_f$ of function $f$, and a $(\theta, \tau, \epsilon, T)$-simulation $\psi=\{(R_i,t_i)\}, i=0,1,\dots,k$.
It computes a 
piece-wise exponential local discrepancy function (LDF) for the compact set $B_\delta(R_0)$ and for the time points $t_0, \dots, t_{k}$. 
and returns it as an array of exponential coefficients $b$.

The algorithm starts with the initial set $B_{\delta}(R_0)$ and with $\Delta=\delta$. In each iteration of the {\bf for\/}-loop it computes exponent $b[i]$ corresponding to the time interval $[t_{i-1},t_i]$.
In the $i^{th}$ iteration, $\Delta$ is updated so that $B_{\Delta}(R_{i-1})$  is an over-approximation of the reachable states from $B_{\delta}(R_0)$ at $ t_{i-1}$ (Lemma \ref{lemma:inductive}).
In~\lnreftwo{alg2:LipDF}{alg2:Lip-bloating}, a set  $S$ is computed by bloating the convex hull $hull(R_{i-1},R_i)$ by a factor of $d =(\Delta+\epsilon) e^{L_f (t_i - t_{i-1})}$.
The set $S$ will later be proved  to be a (coarse) over-approximation of the reachtube from $B_{\Delta}(R_{i-1})$ over the time interval $[t_{i-1},t_{i}]$ (Lemma~\ref{lemma:Lipchitz-reachtube}).
In~\lnsref{alg2:s0}{alg2:upperbound} an upper bound on the maximum eigenvalue of the symmetric part of the Jacobian over the set $S$, is computed as $b[i]$ (Lemma~\ref{lemma:upperbound}). 
Then $\Delta$ is updated as $ (\Delta+\epsilon)  e^{b[i] (t_{i}-t_{i-1})}$ for the next iteration. 
\begin{figure}
\begin{center}
\begin{algorithmic}[1]
\STATE{{\bf Input:} $\psi$,$J_f$,$L_f$,$\delta,\epsilon$}
\STATE{$\Delta \gets \delta$,$b$ $\gets$ zeros(k) \;}
\FOR {i = 1:k}
  \STATE {$\tau \gets t_i - t_{i-1}$}\lnlabel{alg2:start}
	\STATE{$d \gets (\Delta+\epsilon) e^{L_f\tau}$}\lnlabel{alg2:LipDF}
	\STATE{S $\gets$ $hull(R_{i-1},R_{i}) \oplus B_d(0)$} \lnlabel{alg2:Lip-bloating}\;
    \STATE{J $\gets$ $J_f(center(S))$}\; \lnlabel{alg2:s0}
    \STATE{$\lambda$ $\gets \max(eig(J+ J^T)/2)$}\lnlabel{alg2:max-eigen}
    \STATE{error $\gets$ $ \textrm{\chuchu {upper} }_{x\in S} \|(J_f(x)+J_f^T(x))-(J+J^T)\|$} \lnlabel{alg2:maximize}
    \STATE{$b[i]$ $\gets$ $\lambda$ + error/2}\; \lnlabel{alg2:upperbound}
    \STATE{$\Delta$ $\gets$ $(\Delta+\epsilon) e^{b[i] \tau}$}\; \lnlabel{alg2:update delta}
\ENDFOR
\RETURN $b$
\end{algorithmic}
\caption{Algorithm $\findldf$.}
\label{alg:ComputeLDF}
\end{center}
\end{figure}

\subsection{Analysis of ComputeLDF}
\label{sec:analysis}
In this section, we will prove that $\findldf(\psi,J_f,L_f,$ $\delta,\epsilon)$ returns a piece-wise exponential LDF of the system in Equation~(\ref{eq:system}), for the compact neighborhood $B_{\delta}(R_0)$, and the sequence of the time points in the simulation $\psi$.  We establish some lemmas to prove the main theorem.
First, we show in Lemma \ref{lemma:Lipchitz-reachtube} that in the $i^{th}$ iteration of the loop, the computed $S$ is an over-approximation of the set of states that can be reached by the system from $B_{\Delta}(R_{i-1})$ over the time interval $[t_{i-1}, t_i]$.

\begin{lemma}
\label{lemma:Lipchitz-reachtube}
In $i^{th}$ iteration of the loop of $\findldf$, $\reach{}(B_{\Delta}(R_{i-1}),[t_{i-1},t_i]) \subseteq S$.
\end{lemma}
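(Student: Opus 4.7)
The plan is to combine two ingredients: the containment properties of the simulation $\psi$ given by Definition~\ref{def:simulation}, and a Gronwall-style divergence bound coming from the Lipschitz constant $L_f$. Concretely, I want to show that if a trajectory starts at time $t_{i-1}$ at some point $x' \in B_{\Delta}(R_{i-1})$, then for every $s \in [0,\tau]$ the state $\xi(x',s)$ lies within $d$ of the nominal trajectory piece $\xi(\theta, t_{i-1}+s)$, which itself sits inside $hull(R_{i-1},R_i)$.

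First I would bound the initial separation at time $t_{i-1}$. By Definition~\ref{def:simulation}(3), $\xi(\theta, t_{i-1}) \in R_{i-1}$, and by Definition~\ref{def:simulation}(1), $\mathit{dia}(R_{i-1}) \leq \epsilon$. For any $x' \in B_{\Delta}(R_{i-1})$ there is some $r \in R_{i-1}$ with $\|x'-r\| \leq \Delta$, hence
\[
\|x' - \xi(\theta,t_{i-1})\| \;\leq\; \|x'-r\| + \|r-\xi(\theta,t_{i-1})\| \;\leq\; \Delta + \epsilon.
\]

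Next I would apply the standard Gronwall inequality to the autonomous system~(\ref{eq:system}): since $f$ is Lipschitz with constant $L_f$, for any two initial states $y_1,y_2$ and any $s \geq 0$,
\[
\|\xi(y_1,s) - \xi(y_2,s)\| \;\leq\; \|y_1 - y_2\|\, e^{L_f s}.
\]
Setting $y_1 = x'$, $y_2 = \xi(\theta,t_{i-1})$, and using autonomy so that $\xi(\xi(\theta,t_{i-1}),s) = \xi(\theta, t_{i-1}+s)$, I obtain, for every $s \in [0,\tau]$,
\[
\|\xi(x',s) - \xi(\theta, t_{i-1}+s)\| \;\leq\; (\Delta+\epsilon)\, e^{L_f \tau} \;=\; d.
\]

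Finally I would invoke Definition~\ref{def:simulation}(3) a second time: for every $t \in [t_{i-1},t_i]$, $\xi(\theta,t) \in hull(R_{i-1},R_i)$. Combining this with the displayed distance bound yields $\xi(x',s) \in hull(R_{i-1},R_i) \oplus B_d(0) = S$ for all $s \in [0,\tau]$ and all $x' \in B_{\Delta}(R_{i-1})$, which is exactly $\reach{}(B_{\Delta}(R_{i-1}),[t_{i-1},t_i]) \subseteq S$. The only mildly delicate point is keeping the book-keeping straight between the autonomous flow re-indexed from $t_{i-1}$ and the simulation's absolute time stamps; everything else reduces to the triangle inequality plus Gronwall.
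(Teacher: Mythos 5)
Your proof is correct and takes essentially the same route as the paper's: bound the initial separation by $\Delta+\epsilon$ using the diameter bound on $R_{i-1}$, apply Gronwall's inequality with the Lipschitz constant $L_f$ over one sampling interval, and use Definition~\ref{def:simulation}(3) to place the nominal trajectory in $hull(R_{i-1},R_i)$, so that bloating by $d=(\Delta+\epsilon)e^{L_f\tau}$ gives containment in $S$. The only difference is that you make explicit the triangle-inequality step and the time re-indexing via autonomy of the flow, which the paper leaves implicit.
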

\begin{proof}
Let $\xi(\theta,t)$ denote the actual trajectory from $\theta$, where $\theta$ is the initial state of $\psi$. By Definition~\ref{def:simulation} for $\psi$, it is known that $\theta \in R_0$ and $\forall i = 1,\dots, k, \xi(\theta,t_i) \in R_i$.

For a fixed iteration number $i$, consider state $x = \xi(\theta,t_{i-1}) \in R_{i-1}$ from Definition~\ref{def:simulation}. We know that for any $t \in [t_{i-1},t_i]$, 
$\xi(x,t) \in hull(R_{i-1}, R_{i})$. 
Now consider another state $x' \in B_{\Delta}(R_{i-1})$.
Since $L_f$ is the Lipschitz constant of $f$, using Gronwall's inequality we have that  
$\|\xi(x,t) - \xi(x',t)\| \leq  \|x-x'\|e^{L_f(t - t_{i-1})} $. Since $\|x-x'\| \leq \Delta+\epsilon$, $\|\xi(x,t) - \xi(x',t)\| \leq  (\Delta+\epsilon)e^{L_f(t - t_{i-1})} $. Therefore, $\xi(x',t) \in hull(R_{i-1}, R_{i}) \oplus B_{(\Delta+\epsilon) e^{L_f{(t_i-t_{i-1})}}}(0)=S$. Because $x^\prime$ is arbitrarily selected from $B_{\Delta}(R_{i-1})$, the lemma is proved.
%
\end{proof}

Next, using the generalized mean value theorem (Lemma \ref{lemma:highdimensionmean}), we get that in the $i^{th}$ iteration, the computed $b[i]$ in~\lnref{alg2:upperbound} is the exponential divergence (if positive) or convergence (negative) rate of the distance between any two trajectories starting from $B_{\Delta}(R_{i-1})$ over time $[t_{i-1},t_i]$.

\begin{lemma} 
\label{lemma:highdimensionmean}
For any continuously differentiable vector-valued function $f:\reals^{n} \rightarrow \reals^{n}$, and $x,r \in \reals^{n}$,
\begin{equation}
f(x+r)-f(x) = \left(\int_{0}^{1}{J_f(x+s r)ds} \right) \cdot r,
\end{equation}
where the integral is component-wise.
\end{lemma}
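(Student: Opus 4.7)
The plan is to reduce this to the scalar fundamental theorem of calculus by restricting $f$ to the line segment from $x$ to $x+r$. Concretely, I will define the auxiliary function $g : [0,1] \to \reals^n$ by $g(s) = f(x + s r)$. Then $g(0) = f(x)$ and $g(1) = f(x+r)$, so the left-hand side of the claim equals $g(1) - g(0)$.

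Next, I would compute $g'(s)$ via the chain rule. Since $f$ is continuously differentiable and the inner map $s \mapsto x + sr$ has constant derivative $r$, the chain rule gives $g'(s) = J_f(x+sr)\cdot r$. This is a continuous $\reals^n$-valued function of $s$ on the compact interval $[0,1]$, so each component is Riemann integrable.

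Applying the fundamental theorem of calculus component-wise to $g$ yields
\begin{equation*}
g(1) - g(0) = \int_0^1 g'(s)\,ds = \int_0^1 J_f(x+sr)\cdot r\,ds.
\end{equation*}
Finally, since $r$ does not depend on the integration variable $s$, linearity of the (component-wise) integral lets me pull the constant vector $r$ outside:
\begin{equation*}
\int_0^1 J_f(x+sr)\cdot r\,ds = \left(\int_0^1 J_f(x+sr)\,ds\right)\cdot r,
\end{equation*}
which is exactly the claim.

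There is no real obstacle here; the statement is essentially a restatement of the fundamental theorem of calculus in vector form. The only point that merits a line of care is justifying the last step, namely that the matrix-times-vector operation commutes with integration. This follows because $(A(s) \cdot r)_i = \sum_j A(s)_{ij} r_j$ is a finite linear combination of the scalar-valued integrable entries $A(s)_{ij}$, and linearity of the scalar integral then gives $\int \sum_j A(s)_{ij} r_j\,ds = \sum_j r_j \int A(s)_{ij}\,ds = \bigl(\int A(s)\,ds \cdot r\bigr)_i$.
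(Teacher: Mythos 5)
Your proof is correct and follows essentially the same route as the paper's: restrict $f$ to the segment via $g(s)=f(x+sr)$, apply the chain rule and the fundamental theorem of calculus componentwise, and pull the constant vector $r$ out of the integral by linearity (the paper merely phrases this with scalar functions $g_i$ for each component rather than a single vector-valued $g$). No gaps.
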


Next, we will use a well-known theorem that gives bounds on eigenvalues of perturbed symmetric matrices, the proof of which uses the Courant-Fischer minimax theorem. 
The complete proofs of Lemma \ref{lemma:highdimensionmean} and Theorem \ref{thm:lamd_ordering} can be found in the appendix.
\begin{theorem} \label{thm:lamd_ordering}
If $A$ and $E$ are $n \times n$ symmetric matrices, then
$$\lambda_n(E) \leq \lambda_k(A+E) - \lambda_k(A) \leq \lambda_1(E),$$
where $\lambda_i(\cdot)$ is the $i^{\textrm{th}}$ largest eigenvalue of a matrix.
\end{theorem}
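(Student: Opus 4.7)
The plan is to prove both inequalities as direct applications of the Courant--Fischer minimax characterization, which the paper has already flagged as the intended tool. Recall that for a symmetric $n\times n$ matrix $M$ with eigenvalues $\lambda_1(M)\geq\lambda_2(M)\geq\cdots\geq\lambda_n(M)$, Courant--Fischer gives
\[
\lambda_k(M) \;=\; \max_{\substack{S\subseteq\reals^n\\ \dim S = k}}\ \min_{\substack{x\in S\\ \|x\|=1}} x^T M x.
\]
Both bounds in the theorem will fall out of this identity applied to $A+E$, combined with the elementary Rayleigh-quotient bounds $\lambda_n(E)\leq x^T E x \leq \lambda_1(E)$ for every unit vector $x$.

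For the upper bound, I would fix a $k$-dimensional subspace $S$ attaining the outer maximum for $A+E$ and write $x^T(A+E)x = x^T A x + x^T E x$. Bounding $x^T E x \leq \lambda_1(E)$ uniformly in $x$ turns the inner minimum into $\min_{x\in S,\|x\|=1} x^T A x + \lambda_1(E)$. Then taking the max over all such $S$ gives $\lambda_k(A+E) \leq \lambda_k(A) + \lambda_1(E)$, which is precisely the right-hand inequality. For the lower bound, I would run the same argument with $x^T E x \geq \lambda_n(E)$, yielding $\lambda_k(A+E) \geq \lambda_k(A) + \lambda_n(E)$, i.e., the left-hand inequality.

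There is essentially no hard step here; the only thing to be careful about is the indexing convention (the paper uses $\lambda_i$ for the $i$th \emph{largest} eigenvalue, which matches the max-min rather than the min-max form of Courant--Fischer), and the fact that the uniform Rayleigh-quotient bounds on $E$ can be pulled out of the inner minimization because they do not depend on the subspace $S$. If a self-contained derivation of Courant--Fischer is desired, I would add a short appendix lemma proving it by diagonalizing $M$ in an orthonormal eigenbasis and using a dimension-counting argument (any $k$-dimensional $S$ intersects the span of the bottom $n-k+1$ eigenvectors nontrivially, giving the upper bound on the inner min, with equality achieved by the top-$k$ eigenspace); but since the paper already cites Courant--Fischer as known, I would simply invoke it.
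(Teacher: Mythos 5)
Your proposal follows essentially the same route as the paper: both are the standard Courant--Fischer proof of Weyl's inequality, with the perturbation controlled by the Rayleigh-quotient bounds $\lambda_n(E)\leq x^TEx\leq\lambda_1(E)$. The paper uses the dual form (a min--max over subspaces of dimension $n-k+1$, phrased through orthogonality constraints $p_i^Tx=0$), makes the subspace choice explicit by diagonalizing $A$ and taking $p_i=Ue_i$, and then deduces the left-hand inequality by applying the right-hand one with $A+E$ and $-E$ in place of $A$ and $E$; you use the max--min form and handle both inequalities directly. Either variant is legitimate, and your upper-bound step is sound: with $S$ optimal for $A+E$ you get $\lambda_k(A+E)\leq\min_{x\in S,\|x\|=1}x^TAx+\lambda_1(E)\leq\lambda_k(A)+\lambda_1(E)$, since $\lambda_k(A)$ is the maximum of $\min_{x\in S',\|x\|=1}x^TAx$ over all $k$-dimensional $S'$.

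The one place where your sketch, read literally, fails is the lower bound. ``Running the same argument'' with the subspace $S$ that attains the maximum for $A+E$ and the bound $x^TEx\geq\lambda_n(E)$ only yields $\lambda_k(A+E)\geq\min_{x\in S,\|x\|=1}x^TAx+\lambda_n(E)$, and for that particular $S$ one knows only $\min_{x\in S,\|x\|=1}x^TAx\leq\lambda_k(A)$, which points the wrong way. The roles must be swapped: choose $S$ attaining the maximum for $A$, so that $\min_{x\in S,\|x\|=1}x^TAx=\lambda_k(A)$, and use this $S$ merely as a feasible candidate in the max--min characterization of $\lambda_k(A+E)$, giving $\lambda_k(A+E)\geq\min_{x\in S,\|x\|=1}x^T(A+E)x\geq\lambda_k(A)+\lambda_n(E)$. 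Alternatively, do as the paper does and obtain the left inequality from the right one applied to the pair $(A+E,-E)$, using that the eigenvalues of $-E$ in non-increasing order are $-\lambda_n(E),\dots,-\lambda_1(E)$. With that one-line repair your argument is complete and agrees in substance with the paper's.
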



\begin{corollary}
\label{corollary:sum-of-eigenvalue}
If $A$ and $E$ are $n \times n$ symmetric matrices, then
\begin{equation}\label{perturbation of the matrix}
|\lambda_k(A+E) - \lambda_k(A)| \leq \|E\|.
\end{equation}
\end{corollary}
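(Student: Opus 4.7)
The plan is to derive the corollary as a direct consequence of Theorem~\ref{thm:lamd_ordering} together with the characterization of the spectral norm for a symmetric matrix. First I would invoke Theorem~\ref{thm:lamd_ordering} to obtain the two-sided bound
\begin{equation*}
\lambda_n(E) \;\leq\; \lambda_k(A+E) - \lambda_k(A) \;\leq\; \lambda_1(E),
\end{equation*}
which immediately yields $|\lambda_k(A+E) - \lambda_k(A)| \leq \max\{|\lambda_1(E)|,|\lambda_n(E)|\}$.

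Next I would relate this maximum absolute eigenvalue to $\|E\|$. Since $E$ is real symmetric, $E^H E = E^T E = E^2$, so by the definition $\|E\| = \sqrt{\lambda_{\max}(E^H E)}$ we get
\begin{equation*}
\|E\| \;=\; \sqrt{\lambda_{\max}(E^2)} \;=\; \sqrt{\max_i \lambda_i(E)^2} \;=\; \max_i |\lambda_i(E)|.
\end{equation*}
In particular $|\lambda_1(E)| \leq \|E\|$ and $|\lambda_n(E)| \leq \|E\|$. Substituting this into the bound from the previous step gives the claim.

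I don't anticipate any real obstacle here, since both ingredients are in hand: the interlacing-style inequality from Theorem~\ref{thm:lamd_ordering} (which the paper proves via Courant--Fischer in the appendix) and the standard fact that the spectral norm of a symmetric matrix coincides with its largest absolute eigenvalue. The only minor subtlety is making sure to argue over both endpoints of the interval $[\lambda_n(E),\lambda_1(E)]$ rather than just the upper one, which is handled uniformly by the identity $\max_i|\lambda_i(E)| = \|E\|$.
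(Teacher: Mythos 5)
Your proposal is correct and follows essentially the same route as the paper: apply Theorem~\ref{thm:lamd_ordering} to get the two-sided bound and then use the fact that for a symmetric matrix $\|E\| = \max_i|\lambda_i(E)|$ to conclude. Your write-up merely spells out the identity $\|E\|=\sqrt{\lambda_{\max}(E^2)}$ a bit more explicitly than the paper does.
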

Since $A$ is symmetric, $\|A\| = \sqrt{\lambda_{\textrm{max}}(A^TA)}= \max({|\lambda(A)|})$.
From Theorem~\ref{thm:lamd_ordering}, we have $|\lambda_k(A+E) - \lambda_k(A)| \leq$ $\max \{|\lambda_n(E)|,$ $|\lambda_1(E)|\}=\|E\|$.
If $E(x)$ is a matrix-valued function: $\reals^n \rightarrow \reals^{n\times n}$ maps a state $x\in \reals^{n}$ to a matrix $E(x)$, and every component of $E(x), e_{ij}(x): \reals^n \rightarrow \reals$ is continuous over some compact closed set $S$, then we can \chuchu {get an upper bound of} $\|E(x)\|$ over $S$ by \chuchu{compute the upper bound of the absolute value of} each term $e_{ij}(x)$, $|e_{ij}(x)|$ over $S$. Let $\textrm{\chuchu {upper}}_{x \in S}(|e_{ij}(x)|)$ be denoted by $ \tilde e_{ij}$, then we know $\chuchu {\forall x \in S},\|E(x)\| \leq$\\ $\sqrt{\sum_{i=1}^{n}\sum_{j=1}^{n}{\tilde e_{ij}^2}}$. \chuchu {Because we assume the system to be Lipschitz continuous, the upper bound of the symmetric part of the Jacobian matrix in \lnref{alg2:maximize} always exists.}
Using Corollary~\ref{corollary:sum-of-eigenvalue}, we next show in Lemma \ref{lemma:upperbound} that  $b[i]$ calculated in \lnref{alg2:upperbound} bounds the eigenvalues of symmetric part of Jacobian matrix over~$S$.

\begin{lemma} 
\label{lemma:upperbound}
In the $i^{th}$ iteration, for $\forall x \in S:
J_f^T(x) + J_f(x)  \preceq 2b[i] I$.
\end{lemma}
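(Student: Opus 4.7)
The plan is to combine Corollary~\ref{corollary:sum-of-eigenvalue} with the way $b[i]$ is assembled in \lnref{alg2:upperbound}. Set $A = J + J^T$, where $J = J_f(center(S))$ is the Jacobian sampled at the center of $S$ in \lnref{alg2:s0}, and for each $x \in S$ set $E(x) = (J_f(x)+J_f^T(x)) - (J + J^T)$. Both $A$ and $E(x)$ are symmetric by construction, so they satisfy the hypothesis of the corollary.

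Applying Corollary~\ref{corollary:sum-of-eigenvalue} with $k=1$ yields, for every $x \in S$,
\[
\lambda_{\max}\bigl(J_f(x)+J_f^T(x)\bigr) \;=\; \lambda_1(A+E(x)) \;\le\; \lambda_1(A) + \|E(x)\|.
\]
By \lnref{alg2:max-eigen}, $\lambda_1(A) = \lambda_{\max}(J+J^T) = 2\lambda$, and by \lnref{alg2:maximize}, $\|E(x)\| \le \mathrm{error}$ since $\mathrm{error}$ is defined as the upper bound of $\|E(\cdot)\|$ over $S$. Substituting into the inequality and using the definition $b[i] = \lambda + \mathrm{error}/2$ from \lnref{alg2:upperbound} gives
\[
\lambda_{\max}\bigl(J_f(x)+J_f^T(x)\bigr) \;\le\; 2\lambda + \mathrm{error} \;=\; 2\,b[i].
\]

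Finally, because $J_f(x) + J_f^T(x)$ is a real symmetric matrix, an upper bound on its largest eigenvalue translates directly into a Loewner-order bound: $J_f(x)+J_f^T(x) \preceq 2b[i]\,I$ for every $x \in S$, as required. The only non-routine step is justifying the perturbation bound, but that has already been done via Theorem~\ref{thm:lamd_ordering} and its corollary; the remainder is just a careful matching of quantities computed by the algorithm to the symbols in that corollary.
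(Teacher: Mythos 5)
Your proof is correct and follows essentially the same route as the paper's: the same perturbation decomposition $E(x) = J_f(x)+J_f^T(x) - (J+J^T)$, the same application of Corollary~\ref{corollary:sum-of-eigenvalue}, and the same matching of $\lambda$ and $\mathrm{error}$ to the quantities in \lnref{alg2:upperbound}. Your explicit final step converting the eigenvalue bound into the Loewner-order statement is a slightly more careful spelling-out of what the paper calls immediate.
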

\begin{proof}
Let $S$ be the set computed in~\lnref{alg2:Lip-bloating}
and $J$ be the Jacobian evaluated at the center $s_0$ of $S$.
Consider any point $x \in S$. 
We define the perturbation matrix $E(x) \equiv J_f^T(x) + J_f(x) - (J^T +J)$.
Since $J_f^T(x)+J_f(x)$ and $J^T+J$ are symmetric matrices,  Corollary~\ref{corollary:sum-of-eigenvalue} implies that 
$\lambda_{max}(J_f^T(x) + J_f(x)) - \lambda_{max}(J^T +J) \leq ||E(x)||$. 
The $error$ term computed in \lnref{alg2:maximize} is the upperbound on $||E(x)||$. Therefore,
$\lambda_{max}(J_f^T(x) + J_f(x)) \leq \lambda_{max}(J^T +J) + error$.
In \lnref{alg2:upperbound} set $b[i]$  equals to $\lambda_{max}((J^T +J)/2) + error/2$. Thus, $\lambda_{max}(J_f^T(x) + J_f(x)) \leq 2 b[i]$, which immediately indicates that $\forall x \in S: J_f^T(x) + J_f(x)  \preceq 2b[i] I$.
\end{proof}

By Lemma \ref{lemma:highdimensionmean} and Lemma \ref{lemma:upperbound}, we can prove as in Lemma \ref{lemma:DF onestep} that  $b[i] $ calculated in \lnref{alg2:upperbound} is the exponential rate of divergence or convergence of two trajectories starting from $B_{\Delta}(R_{i-1})$ over the interval $[t_{i-1},t_i]$.

\begin{lemma}
\label{lemma:DF onestep}
In the $i^{th}$ iteration, for any two states $x_1, x_2 \in B_{\Delta}(R_{i-1})$ at time $t_{i-1}$, 
and any time $t \in [t_{i-1},t_{i}]$, 
$\| \xi(x_1 ,t) - \xi(x_2,t) \| \leq \|x_1-x_2\|e^{b[i](t-t_{i-1})}$.
\end{lemma}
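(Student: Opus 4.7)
The plan is to bound the time derivative of the squared distance between the two trajectories by a linear factor of itself, and then invoke Gronwall's inequality. Write $y(t) := \xi(x_1,t) - \xi(x_2,t)$ and let $V(t) := \|y(t)\|^2 = y(t)^T y(t)$. Because both trajectories satisfy $\dot\xi = f(\xi)$, we have $\dot y(t) = f(\xi(x_1,t)) - f(\xi(x_2,t))$. Apply Lemma~\ref{lemma:highdimensionmean} with $x = \xi(x_2,t)$ and $r = y(t)$ to rewrite this difference as $\left(\int_0^1 J_f(\xi(x_2,t) + s\, y(t))\, ds\right) y(t)$.

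Next I would establish that the line segment from $\xi(x_2,t)$ to $\xi(x_1,t)$ lies entirely in $S$ for every $t \in [t_{i-1}, t_i]$. By Lemma~\ref{lemma:Lipchitz-reachtube}, both endpoints $\xi(x_1,t)$ and $\xi(x_2,t)$ lie in $S$, since $x_1, x_2 \in B_{\Delta}(R_{i-1})$. The set $S = hull(R_{i-1}, R_i) \oplus B_d(0)$ is a Minkowski sum of two convex sets (a convex hull and a ball), hence convex, so $\xi(x_2,t) + s\, y(t) \in S$ for every $s \in [0,1]$. This is the step where one has to be a little careful: without the convexity of $S$, the pointwise eigenvalue bound on $J_f + J_f^T$ inside $S$ would not translate to a bound along the segment.

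Now differentiate $V$:
\begin{align*}
\dot V(t) &= 2\, y(t)^T \dot y(t) \\
&= 2\, y(t)^T \left(\int_0^1 J_f(\xi(x_2,t) + s y(t))\, ds \right) y(t) \\
&= \int_0^1 y(t)^T \bigl( J_f(\xi(x_2,t)+ s y(t)) + J_f^T(\xi(x_2,t)+ s y(t)) \bigr) y(t)\, ds,
\end{align*}
where in the last line I used that $y^T M y = y^T \tfrac12(M+M^T) y$ for any square matrix $M$. By Lemma~\ref{lemma:upperbound}, the symmetric-part bound $J_f^T(x) + J_f(x) \preceq 2 b[i]\, I$ holds at every $x \in S$, and in particular along the segment. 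Therefore each integrand is at most $2 b[i]\, \|y(t)\|^2 = 2 b[i]\, V(t)$, which yields $\dot V(t) \leq 2 b[i]\, V(t)$.

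Finally, Gronwall's inequality on $[t_{i-1}, t]$ gives $V(t) \leq V(t_{i-1})\, e^{2 b[i](t - t_{i-1})}$, i.e., $\|y(t)\|^2 \leq \|x_1 - x_2\|^2\, e^{2 b[i](t - t_{i-1})}$. Taking square roots yields the claimed bound $\|\xi(x_1,t) - \xi(x_2,t)\| \leq \|x_1 - x_2\|\, e^{b[i](t - t_{i-1})}$. The main obstacle, as noted, is the convexity/containment argument that keeps the mean-value integration path inside $S$; everything else is a routine energy estimate plus Gronwall.
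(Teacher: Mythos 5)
Your proof is correct and follows essentially the same route as the paper: the generalized mean value theorem (Lemma~\ref{lemma:highdimensionmean}), convexity of $S$ as a Minkowski sum of convex sets to keep the integration segment inside $S$, the eigenvalue bound of Lemma~\ref{lemma:upperbound}, and then a differential inequality on $\|y(t)\|^2$. The only cosmetic difference is that you invoke Gronwall's inequality where the paper integrates $\frac{d}{dt}\ln(\|y(t)\|^2)$ directly, which is the same estimate.
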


\begin{proof}
Let us fix the iteration $i$ and two states $x_1, x_2 \in B_{\Delta}(R_{i-1})$. From Lemma \ref{lemma:Lipchitz-reachtube} it's can be seen that for any $t\in[t_{i-1},t_{i}]$, $\xi(x_1,t)\in S, \xi(x_2,t) \in S$.
Define $y(t)\equiv \xi(x_2,t)-\xi(x_{1},t)$. For a fixed time $t$, from Lemma~\ref{lemma:highdimensionmean} we have
\begin{eqnarray} \label{eqn:ODE-delta}
\dot y(t) &=& \dot \xi(x_2,t)-\dot \xi(x_{1},t) = f(\xi(x_2,t))-f(\xi(x_1,t)) \nonumber \\
&=& \left( \int_{0}^{1}{J_f(\xi(x_1,t)+ sy(t))ds}\right) y(t). 
\end{eqnarray}
Since $S$ is the Minkowski sum of two convex sets $hull(R_{i-1},R_i)$ and $B_{\Delta e^{L_f{(t_i-t_{i-1})}}}(0)$, it is also convex. Recall that  $\xi(x_1,t),$ $\xi(x_2,t) \in S$, and  
for any $s \in [0,1]$, $\xi(x_1,t)+s y(t) \subseteq S$.

Differentiating $\|y(t)\|^2$, we have

\begin{eqnarray} \label{eqn:odeof_y}
\frac{d\|y(t)\|^2}{d t}  &=& \dot y^T(t) y(t) + y^T(t)\dot y(t) \nonumber \\
&=&  y^T(t) \left( \int_{0}^{1}{J_f^T(\xi(x_1,t)+sy(t))ds}\right) y(t)  \nonumber \\
&+& y^T(t) \left( \int_{0}^{1}{J_f(\xi(x_1,t)+s y(t))ds}\right) y(t). \nonumber \\
\end{eqnarray}

Using Lemma~\ref{lemma:upperbound}, we know 
$$
\forall x \in S, ~~J_f^T(x)+J_f(x) \preceq 2 b[i]I.
$$
Thus, we can bound \eqref{eqn:odeof_y}
\begin{eqnarray} \label{eqn:odeof_y2}
\frac{d\|y(t)\|^2}{d t} &\leq& y^T(t)\left( \int_{0}^{1}{(2b[i]I) ds}\right)y(t) \nonumber \\
&=& 2b[i] y^T(t)  y (t) \nonumber \\
&=& 2b[i] \|y(t)\|^2.
\end{eqnarray}

Integrating both sides 
over $t_{i-1}$ to any $t \in [t_{i-1},t_{i}]$, we have
\begin{equation*}\label{convergent rate 2}
\begin{split}
&\ln (\|y(t)\|^2)-\ln(\|y(t_{i-1})\|^2) \leq 2b[i](t-t_{i-1}) \\
\Rightarrow &\|y(t)\|^2 \leq \|y(t_{i-1})\|^2 e^{2b[i](t-t_{i-1})} \\
\Rightarrow & \| \xi(x_1 ,t) - \xi(x_2,t) \| \leq \|x_1-x_2\|e^{b[i](t-t_{i-1})}.
\end{split}
\end{equation*}
\end{proof}

Up to this point all the lemmas were statements about a single iteration of the {\bf for \/}-loop, next we show that in $i^{th}$ iteration of the loop, $B_{\Delta}(R_i)$ used in Lemma \ref{lemma:Lipchitz-reachtube} and \ref{lemma:DF onestep} is the reach set from $B_{\delta}(R_0)$ at time $t_i$.
\begin{lemma} \label{lemma:inductive}
For $\forall i = 1,\dots, k$, $\Reach{}(B_{\delta}(R_0),[t_i,t_i]) \subseteq B_{\Delta_i}(R_i)$, and $\Reach{}(B_{\Delta_{i-1}}(R_{i-1}),[t_{i-1},t_{i}]) \subseteq hull(R_{i-1},R_{i}) \oplus B_{\Delta'_i}(0)$, where $\Delta_i$ is $\Delta$ after \lnref{alg2:update delta} is executed in the $i^{th}$ iteration, and $\Delta'_i = \max \{\Delta_i, \Delta_{i-1}+\epsilon\}$
\end{lemma}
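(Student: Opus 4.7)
My plan is to proceed by induction on $i$, with inductive hypothesis the first containment at step $i-1$, namely $\Reach{}(B_\delta(R_0), [t_{i-1}, t_{i-1}]) \subseteq B_{\Delta_{i-1}}(R_{i-1})$, and then to derive both claims at step $i$ from that hypothesis together with Lemma \ref{lemma:DF onestep} and the definition of a simulation. The base case is $i = 1$: the initialization in $\findldf$ sets $\Delta_0 = \delta$, so $\Reach{}(B_\delta(R_0), [t_0, t_0]) = B_\delta(R_0) = B_{\Delta_0}(R_0)$, and the inductive-step argument below covers the $i = 1$ case uniformly.

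For the inductive step, fix any $x' \in B_{\Delta_{i-1}}(R_{i-1})$ and let $x^\star = \xi(\theta, t_{i-1}) \in R_{i-1}$ be the reference simulation state. Since $\mathit{dia}(R_{i-1}) \leq \epsilon$ by the simulation definition, the triangle inequality gives $\|x' - x^\star\| \leq \Delta_{i-1} + \epsilon$. Both $x'$ and $x^\star$ lie in $B_{\Delta_{i-1}}(R_{i-1})$, so Lemma \ref{lemma:DF onestep} applies and yields, for every $t \in [t_{i-1}, t_i]$,
\[
\|\xi(x', t) - \xi(x^\star, t)\| \leq (\Delta_{i-1} + \epsilon)\, e^{b[i](t - t_{i-1})}.
\]
By the simulation definition, $\xi(x^\star, t) = \xi(\theta, t) \in hull(R_{i-1}, R_i)$ for $t \in [t_{i-1}, t_i]$, so $\xi(x', t) \in hull(R_{i-1}, R_i) \oplus B_{\rho(t)}(0)$ with $\rho(t) = (\Delta_{i-1} + \epsilon)\, e^{b[i](t - t_{i-1})}$. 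A short case split on the sign of $b[i]$ shows $\sup_{t \in [t_{i-1}, t_i]} \rho(t) \leq \Delta'_i$: when $b[i] \geq 0$ the supremum is attained at $t = t_i$ and equals $\Delta_i$, while when $b[i] < 0$ it is attained at $t = t_{i-1}$ and equals $\Delta_{i-1} + \epsilon$; in either case the supremum is at most $\max\{\Delta_i, \Delta_{i-1} + \epsilon\} = \Delta'_i$. This establishes the second claim.

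For the first claim, I specialize the displayed bound to $t = t_i$, giving $\|\xi(x', t_i) - \xi(\theta, t_i)\| \leq (\Delta_{i-1} + \epsilon)\, e^{b[i](t_i - t_{i-1})} = \Delta_i$, and since $\xi(\theta, t_i) \in R_i$ we conclude $\xi(x', t_i) \in B_{\Delta_i}(R_i)$. Composing with the inductive hypothesis via the semigroup property of the autonomous flow, any state reachable from $B_\delta(R_0)$ at time $t_i$ can be written as the time-$(t_i - t_{i-1})$ continuation of some $y = \xi(x_0, t_{i-1}) \in B_{\Delta_{i-1}}(R_{i-1})$, so it belongs to $B_{\Delta_i}(R_i)$. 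The main point I expect to state carefully is the definition of $\Delta'_i$ as the max of the two endpoint values: it is designed precisely to absorb both the divergent case $b[i] \geq 0$, where $\rho(t)$ peaks at $t = t_i$, and the convergent case $b[i] < 0$, where $\rho(t)$ peaks at $t = t_{i-1}$; a secondary care point is aligning the time-shifted interpretation of Lemma \ref{lemma:DF onestep} with the absolute-time convention of $\Reach{}$ via time-invariance of the autonomous system.
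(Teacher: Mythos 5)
Your proposal is correct and follows essentially the same route as the paper's proof: induction on $i$, comparing an arbitrary $x' \in B_{\Delta_{i-1}}(R_{i-1})$ against the reference simulation state $\xi(\theta,t_{i-1}) \in R_{i-1}$ via Lemma~\ref{lemma:DF onestep}, bounding $\|x'-\xi(\theta,t_{i-1})\| \leq \Delta_{i-1}+\epsilon$ using $\mathit{dia}(R_{i-1})\leq\epsilon$, and handling the sign of $b[i]$ to obtain $\Delta'_i = \max\{\Delta_i,\Delta_{i-1}+\epsilon\}$. Your explicit appeal to the semigroup property to compose with the inductive hypothesis is just a more careful statement of the paper's final step, so no substantive difference remains.
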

\begin{proof}

In this proof,
let $\xi(\theta,\cdot)$ denote the trajectory from $\theta$. From the Definition~\ref{def:simulation} for $\psi$, we know that $\theta \in R_0$ and $\forall i = 1,\dots, k, \xi(\theta,t_i) \in R_i$. Let $S_i$ denote $S$ after \lnref{alg2:Lip-bloating} is executed in the $i^{th}$ iteration.
The lemma is proved by induction on $i$.
Note that the initial set is $B_{\delta}(R_0)$, and before the {\bf for\/}-loop, $\Delta_0$ is set as $\delta$. 

When $i=1$, we already have $\Reach{}(B_{\delta}(R_0),[t_0,t_0]) = B_{\delta}(R_0) = B_{\Delta_0}(R_0)$.

Lemma~\ref{lemma:Lipchitz-reachtube} indicates that
$\forall t \in [t_0,t_1], \Reach{}(B_{\Delta_0}(R_0),$ $[t_0,t_1]) \subseteq S$. 
And consider state $x =\theta \in R_0$, we also know $\xi(x,t) \in hull(R_0, R_1)$ and $\xi(x,t_1) \in R_1$. 
From Lemma \ref{lemma:DF onestep}, it follows that for $\forall x' \in B_{\Delta_0}(R_0),\forall t \in [t_0,t_{1}] $,
\begin{equation*} \label{eqn:t0-t1}
\begin{aligned}
\|\xi(x ,t) - \xi(x',t) \| \leq\|x-x'\|e^{b[1](t-t_0)}.
\end{aligned}
\end{equation*} 
And at \lnref{alg2:update delta}, $\Delta_1 \gets (\Delta_0+\epsilon)e^{b[1](t_1-t_0)}$. Since $b[1]$ could be positive or negative, $\max_{t \in [t_0,t_1]} {\|x-x'\|e^{b[1](t-t_0)}}=$\\ $\max \{\Delta_1, \Delta_0 + \epsilon\}$.
Therefore, $$\Reach{}(B_{\delta}(R_0),[t_0,t_1]) \subseteq hull(R_0, R_1) \oplus B_{\max\{\Delta_1,\Delta_0+\epsilon\}}(0),$$ and at time $t_1$, $\xi(x',t_1)$ is at most $\Delta_1$ distance to $\xi(x,t_1)\in R_1$, 
so $\Reach{}(B_{\delta}(R_0),[t_1,t_1]) = \Reach{}(B_{\Delta_0}(R_0),[t_1,t_1]) \subseteq B_{\Delta_1}(R_1)$.

Assuming that  the lemma holds for $i=m-1$, we have $\Reach{}(B_{\delta}(R_0),[t_{m-1},t_{m-1}]) \subseteq B_{\Delta_{m-1}}(R_{m-1})$. Next we prove the lemma holds for $i=m$ as well. Consider state $x = \xi(\theta,t_{m-1}) \in R_{m-1}$, $\forall t \in [t_{m-1},t_m]$, by definition it follows that $\xi(x,t) \in hull(R_{m-1}, R_{m})$ and $\xi(x,t_m) \in R_m$.
$\forall x' \in B_{\Delta_{m-1}}(R_{m-1}),\forall t \in [t_{m-1},t_{m}] $, from Lemma \ref{lemma:DF onestep}
\begin{equation*} \label{eqn:t0-ti}
\begin{aligned}
\|\xi(x ,t) - \xi(x',t) \| \leq\|x-x'\|e^{b[m](t-t_{m-1})}.
\end{aligned}
\end{equation*} 
Note at \lnref{alg2:update delta}, $\Delta_m \gets (\Delta_{m-1}+\epsilon)e^{b[m](t_{m}-t_{m-1})}$.
Therefore, 
$\Reach{}(B_{\Delta_{m-1}}(R_{m-1}),[t_{m-1},t_m]) \subseteq hull(R_{m-1},$ $R_m) \oplus B_{\max\{\Delta_m,\Delta_{m-1}+\epsilon\}}(0)$.
And at time $t_m$, $\xi(x',t_m)$ is at most $\Delta_m$ distance to $\xi(x,t_m)\in R_m$.
Hence, $\Reach{}(B_{\Delta_{m-1}}(R_{m-1}),$ $[t_{m},t_m])$ $\subseteq B_{\Delta_m}(R_m)$.
Recall that $\Reach{}(B_{\delta}(R_0),[t_{m-1},t_{m-1}])$ $\subseteq B_{\Delta_{m-1}}(R_{m-1})$, thus $\Reach{}(B_{\delta}(R_0),[t_m,t_m]) \subseteq B_{\Delta_m}(R_m)$.
\end{proof}

$\cup_{i=1}^k\{hull(R_{i-1},R_{i}) \oplus B_{\Delta'_i}(0)\}$  contains the $(B_\delta(R_0), \tau, T)$-reachtube of the system. \lnref{al1:bloat} of the algorithm in Figure \ref{fig:algo1} is computed in this way.
Now we are ready to prove the main theorem. 
\begin{theorem} \label{theorem: main}
The items in array $b$ computed by \\ $\findldf$ are the coefficients of a 
$B_{\delta}(R_0)$-local piece-wise exponential discrepancy function (Definition \ref{def:localdf}).
\end{theorem}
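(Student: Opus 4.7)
The plan is to verify that the function $\beta$ defined in Definition~\ref{def:localdf} using the array $b$ returned by $\findldf$ satisfies the two requirements of a $B_\delta(R_0)$-local discrepancy function from Definition~\ref{def:disc}: (1) it upper-bounds the distance between trajectories starting from $B_\delta(R_0)$, and (2) it vanishes as $x_1 \to x_2$. Condition (2) is immediate from the algebraic form of $\beta$, since $\beta(x_1,x_2,t) = \|x_1-x_2\|\,e^{b[i](t-t_{i-1}) + \sum_{j<i}b[j](t_j-t_{j-1})}$, and the exponential factor is a fixed, finite constant once the time sequence and the coefficients $b[j]$ are determined. So the work is in establishing condition~(1).

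I would prove condition~(1) by induction on $i$, showing that for every pair $x_1, x_2 \in B_\delta(R_0)$ and every $t \in [0, t_i]$,
\begin{equation*}
\|\xi(x_1,t) - \xi(x_2,t)\| \;\leq\; \beta(x_1,x_2,t).
\end{equation*}
The base case $t = t_0 = 0$ is trivial since $\beta(x_1,x_2,0) = \|x_1-x_2\|$. For the inductive step, assume the bound holds at $t_{i-1}$. By Lemma~\ref{lemma:inductive}, the endpoints $\xi(x_1,t_{i-1})$ and $\xi(x_2,t_{i-1})$ both lie in $B_{\Delta_{i-1}}(R_{i-1})$, which is exactly the set to which Lemma~\ref{lemma:DF onestep} applies for the $i$-th iteration. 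Invoking that lemma with these two points (using time-translation invariance of the autonomous system~\eqref{eq:system}), for every $t\in[t_{i-1}, t_i]$ we get
\begin{equation*}
\|\xi(x_1,t) - \xi(x_2,t)\| \;\leq\; \|\xi(x_1,t_{i-1}) - \xi(x_2,t_{i-1})\|\,e^{b[i](t-t_{i-1})}.
\end{equation*}
Combining this with the inductive hypothesis and the recursive definition of $\beta$ yields the desired inequality on $[t_{i-1}, t_i]$, closing the induction.

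The final step is to package this into the exact format of Definition~\ref{def:localdf}: the coefficients produced by $\findldf$ determine the piecewise-exponential function uniquely, and the inductive bound above is precisely the required upper bound on trajectory separation, restricted to initial points in the compact neighborhood $B_\delta(R_0)$, so $\beta$ is a $B_\delta(R_0)$-local discrepancy function. The main obstacle, and the reason the induction works cleanly, is the coupling between Lemma~\ref{lemma:inductive} and Lemma~\ref{lemma:DF onestep}: the coarse over-approximation set $S$ used to derive $b[i]$ must actually contain the reach set from $B_\delta(R_0)$ over $[t_{i-1}, t_i]$, not just from $B_\Delta(R_{i-1})$. Lemma~\ref{lemma:inductive} bridges this gap by guaranteeing that $\Delta_{i-1}$ correctly tracks how the original $\delta$-ball has grown (or shrunk) by time $t_{i-1}$, so that the sets $S_i$ constructed in the algorithm remain valid over-approximations throughout all $k$ iterations. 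Once this alignment is verified, assembling the per-interval bounds into the global piecewise-exponential form is routine.
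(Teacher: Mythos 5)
Your proposal is correct and follows essentially the same route as the paper: the paper's proof also chains Lemma~\ref{lemma:DF onestep} across the intervals $[t_{i-1},t_i]$, using Lemma~\ref{lemma:inductive} to guarantee that $\xi(x_1,t_{i-1})$ and $\xi(x_2,t_{i-1})$ lie in $B_{\Delta_{i-1}}(R_{i-1})$ at each step, and then checks the vanishing condition by bounding the exponential factor. The only cosmetic difference is that the paper spells out why each $b[i]$ (and hence the exponential factor) is finite, via boundedness of the eigenvalue and error terms over the compact set $S$, which your ``fixed, finite constant'' remark compresses.
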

\begin{proof}

First of all consider any time $t \in [t_0,t_1]$ and any two states: $x_1,x_2 \in B_{\delta}(R_0)$. By Lemma \ref{lemma:DF onestep}, $\|\xi(x_1,t)-\xi(x_2,t)\| \leq \|x_1-x_2\|e^{b[1](t-t_0)}$.
Then consider $t\in [t_1, t_2]$. By Lemma \ref{lemma:inductive} we know at time $t_{1}$, $\xi(x_1,t_{1})$ and $\xi(x_2,t_{1})$ are all contained in $B_{\Delta_{1}}(R_{1})$, so we can use Lemma \ref{lemma:DF onestep} such that for any time $t \in [t_1,t_2]$, 
$\|\xi(x_1,t)-\xi(x_2,t)\| \leq \|\xi(x_1,t_1)-\xi(x_2,t_1)\|e^{b[2](t-t_1)} \leq \|x_1-x_2\|e^{b[2](t-t_1)+b[1](t_1-t_0)}$.

The procedure above can be performed iteratively as follows. For any time $t \in [t_{i-1},t_{i}]$, by lemma \ref{lemma:inductive} we know at time $t_{i-1}$, $\xi(x_1,t_{i-1})$ and $\xi(x_2,t_{i-1})$ are all contained in $B_{\Delta_{i-1}}(R_{i-1})$. By Lemma \ref{lemma:DF onestep} it follows that 
\begin{eqnarray*}
\|\xi(x_1,t)-\xi(x_2,t)\|&\leq& \|\xi(x_1,t_{i-1})-\xi(x_2,t_{i-1})\|  e^{b[i](t-t_{i-1})}\\
&\leq&\|x_1-x_2\|  e^{b[i](t-t_{i-1})+\sum_{j=1}^{i-1}b[j](t_j-t_{j-1})}.
\end{eqnarray*}
Next we will prove that \\
$\beta(x_1,x_2,t) \equiv \|x_1-x_2\|  e^{b[i](t-t_{i-1})+\sum_{j=1}^{i-1}b[j](t_j-t_{j-1})} $ is a valid LDF.

In \lnsref{alg2:s0}{alg2:upperbound}, because $J$ is a real matrix, the maximum eigenvalue $\lambda$ of $(J^T+J)/2$ is bounded. Assume that each component of $E(x)= J_f^T(x)+J_f(x)-J^T-J$ is continuous over the closed set $S$, \chuchu {then we can find the upper bound of $\|E(x)\|$}, so the ``error" term is also bounded. Therefore, each $b[i]$ is bounded.
So $\forall t \in [t_{i-1},t_{i}]$, $i= 1,\dots,k$, $\exists N < \infty$, such that  $e^{b[i](t-t_{i-1})+\sum_{j=1}^{i-1}b[j](t_j-t_{j-1})}$ is bounded by $N$ from the above. 

As $x_1 \rightarrow x_2$, obviously, 
$$\|x_1-x_2\|  e^{b[i](t-t_{i-1})+\sum_{j=1}^{i-1}b[j](t_j-t_{j-1})} \rightarrow 0.$$
And for any $\epsilon > 0$, $\exists \delta = \epsilon/N > 0$, such that $\forall x_1,x_2 \in B_{\delta}(R_0)$ and $\|x_1-x_2\| < \delta$, it follows
$$\|x_1-x_2\|  e^{b[i](t-t_{i-1})+\sum_{j=1}^{i-1}b[j](t_j-t_{j-1})} < \epsilon/N\cdot N = \epsilon.$$
 So $\beta(x_1,x_2,t) = \|x_1-x_2\|  e^{b[i](t-t_{i-1})+\sum_{j=1}^{i-1}b[j](t_j-t_{j-1})} $ is a $B_{\delta}(R_0)$-local piece-wise discrepancy function and the array $b$ contains the corresponding coefficients.
\end{proof}

\subsection{Coordinate transformation}
\label{sec:cood}
In this section, we will discuss the issue that the upper bound of the symmetric part of the Jacobian computed in~\lnsref{alg2:s0}{alg2:upperbound} may introduce loss in precision.
We propose a a strategy to reduce this loss by first performing a coordinate transformation.
Consider a simple linear system:  
\begin{equation}\label{linear transformation example}
\dot x = \left[ \begin{array}{cc}
	0 & 3 \\
	-1& 0
\end{array} 
\right]x,
\end{equation}
which has eigenvalues $\pm \sqrt {3}i$ and thus its trajectories oscillate. 
The symmetric part the of the Jacobian is $\left[ \begin{array}{cc}
0 & 1 \\
1& 0
\end{array}
\right]$ 
with eigenvalues $\pm 1$, which gives the exponentially growing discrepancy with $b = 1$.
In what follows, we will see that a tighter bound can be obtained by first taking linear transformation of $x$. 
The following is a coordinate transformed version of Lemma~\ref{lemma:DF onestep}.
The coordinate transformation matrix $P$ can be any $n \times n$ real invertible matrix, and the condition number of $P$ is $\|P\|\|P^{-1}\|$.

\begin{lemma}\label{lemma:coordinate transfer}
In $i^{th}$ iteration of the loop, for any $x_1, x_2 \in B_{\Delta}(R_{i-1})$, 
and any $t \in [t_{i-1},t_{i}]$, 
\[
\| \xi(x_1 ,t) - \xi(x_2,t) \| \leq K\|x_1-x_2\|e^{\tilde \lambda_\textrm{max}(S)(t-t_{i-1})},
\]
where $\tilde \lambda_{max}(S)$ is the upper bound of 
$\frac{1}{2}(\widetilde{J_f}^T(x)+\widetilde{J_f}(x))$ over the set $S$, 
$\widetilde{J_f}(x) = P J_f(x) P^{-1}$, and 
$K$ is the condition number of $P$.
\end{lemma}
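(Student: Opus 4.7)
The plan is to lift the argument of Lemma \ref{lemma:DF onestep} into the transformed coordinates $y = P x$ and then undo the transformation, paying the price of the condition number at the two endpoints. The key observation is that if we set $z(t) \equiv P(\xi(x_2,t)-\xi(x_1,t))$, then by the generalized mean value theorem (Lemma \ref{lemma:highdimensionmean}) we get
\[
\dot z(t) = P\bigl(f(\xi(x_2,t))-f(\xi(x_1,t))\bigr) = P\!\left(\int_0^1 J_f(\xi(x_1,t)+s(\xi(x_2,t)-\xi(x_1,t)))\,ds\right)\!P^{-1} z(t),
\]
and the matrix in parentheses, after conjugation by $P$, is exactly $\int_0^1 \widetilde{J_f}(\cdot)\,ds$.

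The next step is to differentiate $\|z(t)\|^2$ just as in Lemma \ref{lemma:DF onestep}. By Lemma \ref{lemma:Lipchitz-reachtube} both trajectories remain in the convex set $S$ for $t\in[t_{i-1},t_i]$, so the integration argument $\xi(x_1,t)+s(\xi(x_2,t)-\xi(x_1,t))$ also lies in $S$ for every $s\in[0,1]$. Thus the symmetric part of $\widetilde{J_f}$ evaluated along this segment is bounded above by $\tilde\lambda_{\max}(S)\cdot I$, which gives
\[
\frac{d\|z(t)\|^2}{dt} \;\leq\; 2\,\tilde\lambda_{\max}(S)\,\|z(t)\|^2,
\]
and hence, by Gr\"onwall, $\|z(t)\|\leq \|z(t_{i-1})\|\,e^{\tilde\lambda_{\max}(S)(t-t_{i-1})}$.

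Finally, I translate back to the original coordinates using the submultiplicativity of the $l^2$-norm: $\|z(t_{i-1})\| = \|P(x_2-x_1)\|\leq \|P\|\,\|x_1-x_2\|$ at the left endpoint, and $\|\xi(x_2,t)-\xi(x_1,t)\| = \|P^{-1}z(t)\|\leq \|P^{-1}\|\,\|z(t)\|$ at the right. Chaining these two inequalities yields exactly the stated bound with multiplicative constant $K = \|P\|\,\|P^{-1}\|$, the condition number of $P$. The only mildly delicate point is ensuring that the mean-value segment used in the ODE for $z$ stays in $S$ so that the upper bound $\tilde\lambda_{\max}(S)$ can be legally invoked; this is taken care of by the convexity of $S$ already established in Lemma \ref{lemma:DF onestep}, so no new geometric work is needed.
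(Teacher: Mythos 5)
Your proposal is correct and follows essentially the same route as the paper's own proof: define the transformed difference $P(\xi(x_2,t)-\xi(x_1,t))$, derive its ODE by conjugating the mean-value integral of $J_f$ with $P$, bound $\frac{d}{dt}\|\cdot\|^2$ via $\tilde\lambda_{\max}(S)$ using the convexity of $S$ and Lemma~\ref{lemma:Lipchitz-reachtube}, apply Gr\"onwall, and pay $\|P\|\,\|P^{-1}\| = \mathrm{cond}(P)$ when returning to the original coordinates. No gaps; the justification that the mean-value segment stays in $S$ is exactly the one the paper uses.
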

 
\begin{proof}
Let $z(t) = \xi (x_{2},t) - \xi(x_{1},t)$ and $y(t) = P z(t)$
From \eqref{eqn:ODE-delta} get:
\begin{eqnarray*}
 \dot y(t) &=& P \dot z(t)  \\
&=& P\left( \int_{0}^{1}{J_f(\xi(x_1,t)+sz(t))ds}\right) z(t) \\
&=& P\left( \int_{0}^{1}{J_f(\xi(x_1,t)+sz(t))ds}\right) P^{-1} y(t) \\
&=& \left( \int_{0}^{1}{\widetilde{J_f}(\xi(x_1,t)+sz(t))ds}\right) y(t).
\end{eqnarray*}
Since for all $x \in S, \widetilde{J_f}^T(x)+\widetilde{J_f}(x) \preceq  \tilde \lambda_{\textrm{max}}(S)I$ and $\forall s \in [0,1], \xi(x_1,t)+sz(t) \subseteq S$, we have
\begin{eqnarray*}
\frac{d\|y(t)\|^2}{dt} \leq 2 \tilde \lambda_{\textrm{max}}(S) \|y(t)\|^2,
\end{eqnarray*}
which leads to:
$\forall t \in [t_{i-1},t_i]$
\begin{equation} \label{eqn:delta y}
\|y(t)\| \leq \|y(t_{i-1})\|e^{\tilde \lambda_{\textrm{max}}(S)(t-t_{i-1})}.
\end{equation}

Substituting \eqref{eqn:delta y} in $z(t) = P^{-1} y(t)$:
\begin{eqnarray}
\| z(t) \| & \leq &\| P^{-1} \| \| y(t) \| \nonumber \\ 
& \leq & \| P^{-1} \| \| y(t_{i-1}) \|e^{ \tilde \lambda_{max}(S) (t-t_{i-1})} \nonumber \\
& \leq & \| P^{-1} \| \|P \| \|z(t_{i-1}) \|e^{\tilde \lambda_{max} (S) (t-t_{i-1})} \nonumber \\
& = & cond(P) \|z(t_{i-1}) \|e^{ \tilde \lambda_{max} (S) (t-t_{i-1})}.
\end{eqnarray}
\end{proof}

This shows that the distance can be bounded in the same way for the transformed system with a (possibly much smaller ) $\tilde \lambda_{max}(S)$ but with an additional multiplicative cost of $cond(P)$.

\chuchu {To choose the coordinate transformation matrix, one approach that produces good empirical results is making the Jacobian matrix at the center point a real Jordan form. Let $S$ be the set computed in~\lnref{alg2:Lip-bloating}
and $J$ is the Jacobian evaluated at the center $s_0$ of $S$.} Let $\tilde J = PJP^{-1}$ the real Jordan form which looks like:

\[ \left[ \begin{array}{cc|c|cc}
\lambda _1 & \epsilon & 0 & 0 & 0\\
0 & \lambda _1 & 0 & 0 & 0\\
\hline
0 & 0 & \lambda _2 & 0 & 0\\
\hline
0 & 0 & 0 & \lambda _3 & c\\
0 & 0& 0 & -c & \lambda _3
\end{array} \right]\]
where $2\epsilon < \lambda _1$ and $\lambda _1, \lambda _2 , \lambda _3\pm ci$ are the eigenvalues of $J$. There could be several more blocks like 
$\left[ \begin{array}{cc}
\lambda _1 & \epsilon \\
0 & \lambda _1 \\
\end{array} \right]$
,$\lambda_2$ and 
$\left[ \begin{array}{cc}
\lambda _3 & c \\
-c & \lambda _3 \\
\end{array} \right]$ in general.
We use the matrix $P$ as the coordinate transformation matrix for $J_f(x)$.
In this approach the eigenvalues of $\frac{1}{2}(\tilde J+ \tilde J^T)$ are $\lambda _1 + \frac{\epsilon}{2}, \lambda _2 , \lambda _3$, which preserve the original eigenvalues to some extent. 
\chuchu {Contraction matrix \cite{slotine98} introduces more general coordinate transformation. However, there are no general methods to compute it for nonlinear systems. Choosing a constant matrix as transformation is an implementable approach, and Lemma \ref{lemma:coordinate transfer} applies to any invertible matrix.}

\chuchu{Combining the linear transformation, we can get the algorithm in Figure \ref{alg:ComputeLDFwithCT}.}
\begin{figure}
\begin{center}
\begin{algorithmic}[1]
\STATE{{\bf Input:} $\psi$,$J_f$,$L_f$,$\delta,\epsilon$,step}
\STATE{$\Delta \gets \delta$,$b$ $\gets$ zeros(k), $K \gets$ zeros(ceil(k/step)) \;}
\FOR{j = 1:step:k}
\STATE{$[V,D]$ = JordanDecom(average({$R_{j}$,...,$R_{j+step-1}$}))\;}
\STATE{$K(ceil(k/step))) \gets cond(V)$}
\FOR {i = j:j+step-1}
  \STATE {$\tau \gets t_i - t_{i-1}$}\lnlabel{alg2:start}
	\STATE{$d \gets (\Delta+\epsilon) e^{L_f\tau}$}\lnlabel{alg2:LipDF}
	\STATE{S $\gets$ $hull(R_{i-1},R_{i}) \oplus B_d(0)$} \lnlabel{alg2:Lip-bloating}\;
    \STATE{J $\gets$ $V$$J_f(center(S))$}$V^{-1}$\; \lnlabel{alg2:s0}
    \STATE{$\lambda$ $\gets \max(eig(J+ J^T)/2)$}\lnlabel{alg2:max-eigen}
    \STATE{error $\gets$ $ \textrm{{upper} }_{x\in S} \|V((J_f(x)+J_f^T(x))-(J+J^T))*V^{-1}\|$} \lnlabel{alg2:maximize}
    \STATE{$b[i]$ $\gets$ $\lambda$ + error/2}\; \lnlabel{alg2:upperbound}
    \STATE{$\Delta$ $\gets$ $(\Delta+\epsilon) e^{b[i] \tau}$}\; \lnlabel{alg2:update delta}
\ENDFOR
\STATE{$\Delta$ $\gets K(ceil(k/step))) \Delta$}\; 
\ENDFOR
\RETURN $b,K$
\end{algorithmic}
\caption{Algorithm $\findldf$ to coordinate transformation.}
\label{alg:ComputeLDFwithCT}
\end{center}
\end{figure}

In the previous example \eqref{linear transformation example},
the Jacobian matrix is constant, and the discrepancy function without coordinate transformation is:
\[
\|\xi (x_{1},t)-\xi (x_{2},t)\| \leq \|x_{1}-x_{2}\|e^{t-t_1}.
\]
If we use $P = \left[ \begin{array}{cc}
1 & 3 \\
-\sqrt{3} & \sqrt{3} \\
\end{array} \right]$ as the coordinate transformation matrix, $\tilde J = PJP^{-1} = 
\left[ \begin{array}{cc}
0 & \sqrt{3} \\
-\sqrt{3} & 0 \\
\end{array} \right],$
and the discrepancy function with coordinate transformation is 
\[
\|\xi (x_{1},t)-\xi (x_{2},t)\| \leq \sqrt{3}\|x_{1}-x_{2}\|.
\]
In practice, the coordinate transformation can be made for longer time interval $[t_{i-k},t_i]$, where $k>2$, to reduce the multiplicative error term $\prod cond(P[i])$. 
\section{Local Input-State Discrepancy} \label{sec:ISDF}
\label{sec:lisd}
Large and complex models of dynamical system are created by composing smaller modules or subsystems. Consider a dynamical system $A$ consisting of several interacting subsystems $A_1,\dots,A_N$, that is, the input signals of a subsystem $A_i$ are driven by the outputs (or states) of some another component $A_j$. Let's say that each $A_i$ is $n$-dimensional which makes $A$ $nN$-dimensional. One way of achieving scalable verification of $A$ is to exploit this compositional structure and somehow analyze the component $A_i$'s to infer properties of $A$.

In~\cite{HuangM:HSCC2014}, the notion of input-to-state (IS) discrepancy was introduced to address the problem of finding annotations for large models. It is shown that if we can find input-to-state (IS) discrepancy functions for the individual component $A_i$, then we can construct a reduced $N$-dimensional model $M$ such that the executions of $M$ serve as the discrepancy of the overall system. Thus, from IS-discrepancy for the smaller $A_i$ models and simulations of the  $N$-dimensional system $M$, we are able to verify $A$. This has the beneficial side-effect that if the $A_i$'s are rewired in a new topology, then only the reduced model changes~\cite{HuangFMMK14}. However,\cite{HuangM:HSCC2014} still assumes that the user provides the IS-discrepancy for the smaller modules. In this section, we will show the approach used in previous section can be used to get IS discrepancy function for Lipschitz continuous nonlinear subsystems $A_i$. Furthermore, it gives an over-approximation of the reachsets with nondeterministic bounded inputs.

\subsection{Defining Local IS Discrepancy}
\label{sec:LISD}
Consider a dynamical system with inputs: 
\begin{eqnarray}\label{eqn:system-input}
\dot x = f(x,u) \label{eq:sysinputs}
\end{eqnarray}
where $f: \reals^n \times \reals^p \rightarrow \reals^n$ 
is Lipschitz continuous. 
For a given input signal which is a integrable function $\upsilon:[0,\infty) \rightarrow \reals^p$, and an initial state $x_0 \in \reals^n$, a solution (or trajectory) of the system is a function $\xi: \reals^n \times \nnreals \rightarrow \reals^n$ such that $\xi(x_0,0) = x_0$ and for any time $t\geq0$, $\dot \xi(x,t) = f(\xi(x,t),\upsilon(t))$. 

First, we give the original definition of IS discrepancy function for the system in \eqref{eqn:system-input}.
Here $\U$ is the set $\{u | u:[0,\infty) \rightarrow \reals^p\}$ of all input signals.
\begin{definition}
A pair of  uniformly continuous functions $\beta: \nnreals \times \nnreals \rightarrow \nnreals$ and 
$\gamma: \nnreals \rightarrow \nnreals$ is called $C$-local input-to-state discrepancy if
\begin{enumerate}[(1)]
\item $\beta$ is of class $\K$ with respect to its first argument and $\gamma$ is also of class $\K$,
\item for any pair of initial states $x,x' \in$ $C$, any pair of input signals $u,u' \in \U$, and $t \in \nnreals$:\begin{equation} \label{ISDF}
\|\xi(x,t)-\xi(x',t)\| \leq \beta(\|x-x'\|,t) + \int_{0}^{t}{\gamma(\|u(s)-u'(s)\|)ds}.
\end{equation}
\end{enumerate}
\end{definition}




For a bounded, compact set $\I \subseteq \reals^p$. A family of bounded time input signals over $\I$ is the set $\U(\I)=\{u| u: [0,T) \rightarrow \I \}$ of integrable functions.
We denote $\Reach{}(K, \U(\I), [t_1,t_2])$ as the reachable states of the system from compact set $K$ with input set $\U(\I)$ over $[t_1,t_2]$.
Next, we introduce an inductive definition of IS discrepancy for inputs over compact neighborhoods.
\begin{definition}
Consider compact sets $K \in \reals^n,\I \in \reals^p$ and a sequence of time points $0 = t_0 < t_1 < t_2 < \ldots < t_k = T$.
For any pair of initial states $x_1, x_2 \in K$, any pair of input signals $u_1,u_2 \in \U(\I)$, the $(K,\U(\I))$-local IS discrepancy function $\alpha: K^2 \times \U(\I)^2  \times \nnreals \rightarrow \nnreals$ is defined as:

$\alpha(x_1,x_2,u_1,u_2,t) =$ 
\begin{eqnarray*}
\left\{ \begin{array}{l}
\|x_1-x_2\|,~~ \textrm{if}~t = t_0, \\
\alpha(x_1,x_2,u_1,u_2,t_{i-1})e^{a[i](t-t_{i-1})}\\
+M[i]e^{a[i](t-t_{i-1})}\int_{t_{i-1}}^{t}{\|u_1(\tau)-u_2(\tau)\|d \tau}, \textrm{if}~t\in (t_{i-1},t_i],\\
\end{array} \right. \\
\end{eqnarray*}

where $a[1], \dots, a[k], M[1], \dots, M[k]$ are real constants. 
\end{definition}

\subsection{Algorithm for Local IS Discrepancy}
The approach to find $(K,\U(\I))$-local IS discrepancy function is similar to $\findldf$ algorithm, which also uses a {\bf for \/}-loop to compute the coefficients $a[i]$ and $M[i]$. The only changes are 1) in \lnref{alg2:Lip-bloating} $S$ should be computed as in Lemma \ref{lemma:islip}, 2) in \lnref{alg2:update delta} $\Delta$ should be updated as in Lemma \ref{thm:ISDF}. Next we illustrate this process in more detail.
First, we use Lipschitz constant to get a coarse over-approximation of $\Reach{}(K,\U(\I), [t_{i-1},t_{i}])$ parallel to Lemma \ref{lemma:Lipchitz-reachtube}. Let $l = dia(\I)$.
\begin{lemma} \label{lemma:islip}
In $i^{th}$ iteration of the {\bf for \/}-loop, $\Reach{}(B_{\Delta}(R_{i-1}),$ $\U(\I), [t_{i-1},t_{i}]) \subseteq S$, where $S = hull(R_{i-1},R_{i}) \oplus B_{\Delta'}(R_i)$ and $\Delta' = (\Delta+\epsilon)(e^{L_f\tau_i})+ lL_fe^{L_f\tau_i}\tau_i$, $\tau_i = t_{i}-t_{i-1}$.
\end{lemma}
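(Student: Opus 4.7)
The plan is to mimic the proof of Lemma~\ref{lemma:Lipchitz-reachtube}, but with a Gronwall-type inequality that accounts for the mismatch in input signals. I would fix an arbitrary $x' \in B_\Delta(R_{i-1})$ and an arbitrary input $u' \in \U(\I)$, then compare the trajectory $\xi(x',u',\cdot)$ against a reference trajectory carried by the simulation $\psi$. Concretely, let $u$ denote the reference input used to generate $\psi$ and take $x = \xi(\theta, u, t_{i-1}) \in R_{i-1}$; by the simulation property, $\xi(x,u,t) \in hull(R_{i-1},R_i)$ for every $t \in [t_{i-1},t_i]$.

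The main step is a Gronwall estimate. Writing $z(t) = \xi(x',u',t) - \xi(x,u,t)$ and using that $f$ is $L_f$-Lipschitz jointly in its two arguments, one obtains
\[
\frac{d}{dt}\|z(t)\| \,\leq\, L_f \|z(t)\| + L_f \|u'(t) - u(t)\|,
\]
which, by the integral form of Gronwall's inequality over $[t_{i-1},t]$, yields
\[
\|z(t)\| \,\leq\, \|x-x'\|\, e^{L_f(t-t_{i-1})} + L_f \int_{t_{i-1}}^{t} e^{L_f(t-s)} \|u'(s)-u(s)\|\, ds.
\]
Then I would bound each ingredient: $\|x-x'\| \leq \Delta + \epsilon$ because $x \in R_{i-1}$, $\mathit{dia}(R_{i-1}) \leq \epsilon$, and $x' \in B_\Delta(R_{i-1})$; $\|u'(s)-u(s)\| \leq l = \mathit{dia}(\I)$ since both input values lie in $\I$; and $e^{L_f(t-s)} \leq e^{L_f\tau_i}$ for $s \in [t_{i-1},t]$. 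Combining these gives $\|z(t)\| \leq (\Delta+\epsilon)e^{L_f\tau_i} + l L_f e^{L_f\tau_i}\tau_i = \Delta'$.

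Finally, since $\xi(x,u,t) \in hull(R_{i-1},R_i)$ and $\|\xi(x',u',t)-\xi(x,u,t)\| \leq \Delta'$, we conclude $\xi(x',u',t) \in hull(R_{i-1},R_i) \oplus B_{\Delta'}(0) \subseteq S$. As $x'$ and $u'$ were arbitrary, this proves the reach set inclusion.

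The main obstacle I anticipate is not the Gronwall step itself (which is standard) but making sure the inequality is stated and applied with the right Lipschitz hypothesis on $f$ with respect to both arguments, and carefully tracking that the looser estimate $L_f \int_{t_{i-1}}^{t}e^{L_f(t-s)} ds \leq L_f e^{L_f\tau_i}\tau_i$ (rather than the tighter $e^{L_f\tau_i}-1$) is what produces the $l L_f e^{L_f\tau_i}\tau_i$ term in the stated $\Delta'$. A minor bookkeeping point is to confirm the convention that the ``simulation'' for a system with inputs carries along a reference input signal $u$ whose trajectory is contained in the sequence $\{R_i\}$, since otherwise the reference trajectory $\xi(x,u,\cdot)$ used to anchor the Gronwall comparison is not available.
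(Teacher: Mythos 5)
Your proposal is correct and takes essentially the same route as the paper: the paper merely states the Gronwall-type bound $\|\xi(x_1,t)-\xi(x_2,t)\| \leq \|x_1-x_2\|e^{L_f(t-t_{i-1})} + L_f e^{L_f(t-t_{i-1})}\int_{t_{i-1}}^{t}\|u_1(\tau)-u_2(\tau)\|\,d\tau$ for two trajectories with different initial states and inputs and says the lemma ``directly follows,'' which is precisely your argument of anchoring to the simulated trajectory inside $hull(R_{i-1},R_i)$ and bounding $\|x-x'\|\leq \Delta+\epsilon$, $\|u'(s)-u(s)\|\leq l$, and the integral factor by $e^{L_f\tau_i}\tau_i$. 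Your reading of $S$ as $hull(R_{i-1},R_{i}) \oplus B_{\Delta'}(0)$ (rather than the statement's $B_{\Delta'}(R_i)$, an apparent typo) is the one consistent with the algorithm's bloating step, so that is not an issue.
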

Two trajectories starting from $x_1,x_2 \in \reals^n$ at $t_{i-1}$, with $u_1,u_2 \in \U(\I)$ as inputs respectively, their distance at time $t$,
$\|\xi(x_1,t)-\xi(x_2,t)\| \leq \|x_1-x_2\|(e^{L_f(t-t_{i-1})})+$ $L_fe^{L_f(t-t_{i-1})} \cdot $ $ \int_{t_{i-1}}^{t}{\|u_1(\tau)-u_2(\tau)\|d\tau}$. The lemma directly follows this inequality.

Next we give a one step IS discrepancy function in Lemma \ref{thm:ISDF}.
Before proving it, we need another generalized form of mean value theorem:
\begin{lemma}\label{lemma: high dimension with input mean value function}
For any continuous and differentiable function $f:\reals^n \times \reals^p \rightarrow \reals^n$, $f(x+r,u+w) - f(x,u) =$ \\$\left(\int_{0}^{1}{J_x(x+sr,u+w)ds}\right)r + $ $\left( \int_{0}^{1}{J_u(x,u+\tau w)d \tau} \right) w$,
where $J_x = \frac{\partial f(x,u)}{\partial x}$ and $J_u = \frac{\partial f(x,u)}{\partial u}$
are the Jacobian matrices of $f$ with respect to $x$ and $u$.
\end{lemma}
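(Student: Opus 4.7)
The plan is to reduce this to two applications of the single-variable mean value identity already established in Lemma~\ref{lemma:highdimensionmean}. The key idea is to introduce an intermediate point and split the total increment in $f$ into one piece that varies only the state argument and another that varies only the input argument.

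Concretely, I would begin by writing the telescoping decomposition
\[
f(x+r, u+w) - f(x, u) = \bigl[f(x+r, u+w) - f(x, u+w)\bigr] + \bigl[f(x, u+w) - f(x, u)\bigr].
\]
For the first bracket, the input argument is held fixed at $u+w$ and only the state argument changes by $r$; I would define $g_1(s) \equiv f(x+sr, u+w)$ for $s\in[0,1]$, so that $g_1'(s) = J_x(x+sr, u+w)\, r$. Applying the fundamental theorem of calculus componentwise (exactly as in Lemma~\ref{lemma:highdimensionmean}) yields
\[
f(x+r, u+w) - f(x, u+w) = \int_0^1 g_1'(s)\, ds = \left(\int_0^1 J_x(x+sr, u+w)\, ds\right) r.
\]
For the second bracket, the state argument is held fixed at $x$ and only the input argument changes by $w$; I would define $g_2(\tau) \equiv f(x, u+\tau w)$ for $\tau\in[0,1]$, so that $g_2'(\tau) = J_u(x, u+\tau w)\, w$, giving
\[
f(x, u+w) - f(x, u) = \int_0^1 g_2'(\tau)\, d\tau = \left(\int_0^1 J_u(x, u+\tau w)\, d\tau\right) w.
\]
Adding the two resulting expressions produces exactly the claimed identity.

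There is no real obstacle here: the only subtlety is that $J_x$ and $J_u$ are matrix-valued and the identity is vector-valued, so the integrals must be interpreted componentwise, just as in Lemma~\ref{lemma:highdimensionmean}. Continuity and differentiability of $f$ in both arguments separately (which is implied by the stated hypotheses) is what justifies applying the single-variable fundamental theorem of calculus along each of the two line segments in $\reals^n \times \reals^p$. The asymmetry in the two integrands---the first has $u+w$ frozen inside $J_x$ while the second has $x$ frozen inside $J_u$---is a direct consequence of the particular order in which the telescoping was performed, and a symmetric alternative could be obtained by decomposing through $(x+r, u)$ instead.
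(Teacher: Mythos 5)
Your proposal is correct and follows the paper's own argument exactly: the paper also splits the increment through the intermediate point $(x, u+w)$ and then invokes Lemma~\ref{lemma:highdimensionmean} on each of the two differences. Your write-up simply spells out the fundamental-theorem-of-calculus details that the paper leaves implicit.
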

\begin{proof}
The lemma follows 
by writing $f(x+r,u+w) - f(x,u) = f(x+r,u+w) - f(x,u+w) + f(x,u+w)-f(x,u)$
and then invoking Lemma~\ref{lemma:highdimensionmean}.
\end{proof}

\begin{lemma}\label{thm:ISDF}
Consider the $i^{th}$ iteration of the loop for a dynamic system \eqref{eqn:system-input}.
Let $x,x' \in B_{\Delta}(R_{i-1})$, and $\xi(x,t)$, $\xi(x',t)$ be the trajectories  starting from $x$ and $x'$ with input $u_1(t),u_2(t) \in \U(\I)$ respectively, where $t\in [t_{i-1},t_i]$.
Then,
\begin{eqnarray}
&&\|\xi(x,t) - \xi(x',t)\|  \leq \|x-x'\| e^{a(t-t_{i-1})}\nonumber \\
&+& Me^{a(t-t_{i-1})}\int_{t_{i-1}}^{t}{\|u_1(\tau) - u_2(\tau)\| d \tau},
\end{eqnarray}
where $a = \lambda_{max}(S)+\frac{1}{2}$,
$\lambda_{max}(S)$ is the upperbound of the eigenvalues of the symmetric part of $J_x$ over $S$,
and $M = \max\limits_{u \in \U(\I)}{\left( \| J_u(\xi(x,t),u)\| \right)}$.
\end{lemma}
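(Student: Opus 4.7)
My plan is to parallel the argument of Lemma~\ref{lemma:DF onestep}, but now tracking the extra contribution of the input mismatch using the two-variable mean value theorem (Lemma~\ref{lemma: high dimension with input mean value function}).

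Set $y(t) \define \xi(x',t) - \xi(x,t)$ and $w(t) \define u_2(t) - u_1(t)$. For fixed $t\in [t_{i-1},t_i]$, applying Lemma~\ref{lemma: high dimension with input mean value function} to $\dot y = f(\xi(x',t),u_2(t)) - f(\xi(x,t),u_1(t))$ yields
\begin{equation*}
\dot y(t) = A(t)\, y(t) + B(t)\, w(t),
\end{equation*}
where
$A(t) = \int_0^1 J_x(\xi(x,t)+s\,y(t),\, u_2(t))\, ds$ and
$B(t) = \int_0^1 J_u(\xi(x,t),\, u_1(t)+\tau w(t))\, d\tau$.
Since $S$ is convex and, by Lemma~\ref{lemma:islip}, both $\xi(x,t)$ and $\xi(x',t)$ lie in $S$, the segment $\xi(x,t)+sy(t)$ stays in $S$ for every $s\in[0,1]$; likewise $u_1(t)+\tau w(t)\in\I$ since $\I$ is convex. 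Therefore $\|B(t)\|\leq M$ by the definition of $M$, and Lemma~\ref{lemma:upperbound} (applied to $J_x$ over $S$) gives $A(t)+A(t)^T \preceq 2\lambda_{\max}(S)\, I$ after taking the average in $s$.

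Next I would differentiate $V(t)\define\|y(t)\|^2$ and split the resulting expression:
\begin{equation*}
\dot V = y^T(A+A^T)y + 2\, y^T B w \;\leq\; 2\lambda_{\max}(S)\|y\|^2 + 2M\|y\|\,\|w\|.
\end{equation*}
Applying Young's inequality $2M\|y\|\|w\| \leq \|y\|^2 + M^2\|w\|^2$ to the cross term (this is exactly where the extra $\tfrac12$ in $a$ appears) gives
\begin{equation*}
\dot V(t) \;\leq\; 2a\, V(t) + M^2 \|w(t)\|^2, \qquad a = \lambda_{\max}(S)+\tfrac12.
\end{equation*}
Grönwall's inequality then produces
$V(t) \leq V(t_{i-1})\, e^{2a(t-t_{i-1})} + M^2\!\int_{t_{i-1}}^{t} e^{2a(t-s)}\|w(s)\|^2\, ds$.

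The last step is to extract the form stated in the lemma from this $V$-bound. Taking square roots with $\sqrt{A+B}\leq\sqrt A+\sqrt B$ yields $\|y(t)\|\leq \|y(t_{i-1})\| e^{a(t-t_{i-1})} + M\bigl(\int_{t_{i-1}}^t e^{2a(t-s)}\|w(s)\|^2\, ds\bigr)^{1/2}$, and since $a\geq 0$ the exponential factor pulls out to give $M e^{a(t-t_{i-1})}\bigl(\int_{t_{i-1}}^t\|w(s)\|^2\, ds\bigr)^{1/2}$. I expect the main obstacle is precisely this last reconciliation: the Young-inequality route produces $\int\|w\|^2$ while the lemma states $\int\|w\|$. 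The cleanest resolution is to rerun the derivative argument on the smoothed quantity $V_\varepsilon(t)=\sqrt{\|y(t)\|^2+\varepsilon^2}$, which satisfies $\dot V_\varepsilon \leq a V_\varepsilon + M\|w\|$ for $a\geq\lambda_{\max}(S)$ (using $\|y\|\leq V_\varepsilon$), apply the scalar Grönwall to this, let $\varepsilon\to 0$, and then use $a\geq 0$ to factor $e^{a(t-t_{i-1})}$ out of the resulting integral. This yields exactly the form $\|y(t)\|\leq \|x-x'\|\, e^{a(t-t_{i-1})} + M e^{a(t-t_{i-1})}\!\int_{t_{i-1}}^{t}\|u_1(\tau)-u_2(\tau)\|\, d\tau$, completing the proof.
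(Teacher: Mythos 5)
Your proposal starts exactly where the paper does: the same decomposition of $\dot y$ via Lemma~\ref{lemma: high dimension with input mean value function}, the same differentiation of $\|y(t)\|^2$, and the same Young-type estimate on the cross term, which is precisely where the paper's $a=\lambda_{\max}(S)+\tfrac12$ comes from (you correctly write $M^2\|w\|^2$ where the paper writes $M\|v\|^2$). The divergence is in the finish, and your instinct about the obstacle is exactly right: the paper goes from $\|y(t)\|^2\le e^{2a(t-t_{i-1})}\bigl(\|y(t_{i-1})\|^2+\int_{t_{i-1}}^{t}M\|v(\tau)\|^2d\tau\bigr)$ to the stated bound with $\int\|v\|$ by simply declaring ``it follows that,'' i.e.\ it makes the very $L^2$-to-$L^1$ jump you flag as unjustified. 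Your replacement finish --- deriving a differential inequality for $\|y\|$ itself via $V_\varepsilon=\sqrt{\|y\|^2+\varepsilon^2}$, applying scalar Gr\"onwall, and letting $\varepsilon\to0$ --- is a genuinely different and more rigorous route; it yields $\|y(t)\|\le\|y(t_{i-1})\|e^{\lambda_{\max}(S)(t-t_{i-1})}+M\int_{t_{i-1}}^{t}e^{\lambda_{\max}(S)(t-s)}\|w(s)\|ds$ and shows the $+\tfrac12$ is only an artifact of working with the squared norm. (When $\lambda_{\max}(S)<0$ the bound $\lambda_{\max}(S)\|y\|^2/V_\varepsilon\le\lambda_{\max}(S)V_\varepsilon$ needs the extra term $|\lambda_{\max}(S)|\varepsilon^2/V_\varepsilon\le|\lambda_{\max}(S)|\varepsilon$, which disappears in the limit, so that technicality is harmless.)

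The one genuine soft spot is your repeated appeal to ``$a\ge 0$'': it is needed to pull $e^{a(t-t_{i-1})}$ out of the convolution integral, but $a=\lambda_{\max}(S)+\tfrac12$ is not guaranteed to be nonnegative. Indeed, if $\lambda_{\max}(S)<-\tfrac12$ the factored form in the lemma statement can fail outright: for the scalar system $\dot x=-cx+u$ with $c>\tfrac12$, equal initial states, $u_1\equiv 0$ and $u_2$ an admissible input concentrated in a short window just before time $t$, the left side approaches $M\int_{t_{i-1}}^{t}\|u_1-u_2\|d\tau$ while the right side carries the factor $e^{a(t-t_{i-1})}<1$. So strictly one should either assume $a\ge 0$ or replace the prefactor on the integral by $e^{\max(a,0)(t-t_{i-1})}$. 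This caveat applies equally to the paper's own statement and to its proof (whose final two steps also implicitly require $a\ge0$), so it is not a defect introduced by your approach; with that proviso, your argument is correct and in fact tighter and more careful than the one in the paper.
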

\begin{proof}
 let $y(t) = \xi(x',t) - \xi(x,t)$ and $v(t) = u_2(t) - u_1(t)$.
For a fixed time $t$, using Lemma \ref{lemma: high dimension with input mean value function}
\begin{eqnarray*}
\dot y(t) &=& \dot \xi(x',t) - \dot \xi(x,t) \nonumber \\
&=& f(\xi(x',t),u_2(t))-f(\xi(x,t),u_1(t))  \nonumber \\
& = & \left( \int_{0}^{1}{J_x(\xi(x,t)+s y(t),u_2(t))ds} \right) y(t) \nonumber \\
&+& \left( \int_{0}^{1}{J_u(\xi(x,t),u_1(t)+\tau v(t))d \tau} \right)v(t) .
\end{eqnarray*}
We write $J_x(\xi(x,t)+s y(t),u_2(t))$ as $J_x$ and $J_u(\xi(x,t),u_1(t)+\tau v(t))$ as $J_u$.
Then the differentiating $\|y(t)\|^2$ with respect to $t$:
\begin{equation} \label{ineqn:ode}
\begin{aligned}
&\frac{d}{dt}\|y(t)\|^2 = y^T(t) \left( \int_{0}^{1}{(J_x^T+J_x) ds}\right) y(t) \\
&+ v^T(t) \left( \int_{0}^{1}{J_u^Td \tau} \right) y(t)+ y^T(t) \left( \int_{0}^{1}{J_u d \tau} \right) v(t)   \\
&\leq y^T(t) \left( \int_{0}^{1}{J_x^T+J_x ds} \right) y(t) + y^T(t) y(t)  \\
&+ \left(\left( \int_{0}^{1}{J_u d \tau} \right) v(t)\right)^T \left(\left( \int_{0}^{1}{J_u d \tau} \right) v(t)\right). \\ 
\end{aligned}
\end{equation}

Recall that $\lambda_{max}(S)$ is the upperbound of the eigenvalues of the symmetric part of $J_x$ over $S$, so $J^T_x+J_x \preceq 2 \lambda_{max}(S)I$.
Therefore, \eqref{ineqn:ode} becomes:

$$\frac{d}{dt}\|y(t)\|^2 \leq (2\lambda_{max}(S)+1) \|y(t)\|^2 + \|\left( \int_{0}^{1}{J_u d \tau} \right) v(t)\|^2.$$

Let $2a = 2\lambda_{max}(S)+1$, $M = \max\limits_{u \in \U(\I)}{\left( \| J_u(\xi(x,t),u)\| \right)},$
then equation \eqref{ineqn:ode} becomes
\begin{equation}
\frac{d}{dt}\|y(t)\|^2 \leq 2a \|y(t)\|^2 + M \|v(t)\|^2.
\end{equation}
Integrating each side from $t_{i-1}$ to $t$ where $t<t_i$, we have:
\begin{eqnarray}
\|y(t)\|^2 \leq e^{2a(t-t_{i-1})}\left(\| y(t_{i-1})\|^2 + \int_{t_{i-1}}^{t}{M\|v(\tau)\|^2 d \tau} \right).
\end{eqnarray}
It follows that,
\[
\|y(t)\| \leq e^{a(t-t_{i-1})}\| y(t_{i-1})\| + Me^{a(t-t_{i-1})}\int_{t_{i-1}}^{t}{\|v(\tau)\| d \tau}.
\]
\end{proof}

Using Lemma \ref{thm:ISDF} to get the coefficients $a[i]$ and $M[i]$ in each time interval $[t_{i-1},t_i], i=1\dots,k$, we will have:
\begin{theorem}\label{thm:ISmain}
The items in array $a$ and $M$ are a coefficients of the $(K,\U(\I))$-local IS discrepancy function for the system \eqref{eqn:system-input}.
\end{theorem}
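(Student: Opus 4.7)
The plan is to mirror the inductive argument used in the proof of Theorem~\ref{theorem: main}, but now carrying along the accumulated input-difference integral. First I would set up a reachtube invariant analogous to Lemma~\ref{lemma:inductive}: define $\Delta_i$ to be the value of $\Delta$ after the update in the $i^{\mathrm{th}}$ iteration (where the update now uses the formula from Lemma~\ref{thm:ISDF} applied with $\|u_1 - u_2\| \leq l$ in the worst case to bound the reach set) and show by induction on $i$ that $\Reach{}(B_\delta(R_0), \U(\I), [t_i,t_i]) \subseteq B_{\Delta_i}(R_i)$. The base case is immediate, and the inductive step follows from Lemma~\ref{lemma:islip} (which certifies that the over-approximation $S$ used to bound the Jacobian and $J_u$ actually contains the reachtube over $[t_{i-1},t_i]$) together with Lemma~\ref{thm:ISDF} evaluated at $t = t_i$.

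Second, I would fix two initial states $x_1, x_2 \in K$ and two input signals $u_1, u_2 \in \U(\I)$ and chain the one-step bound from Lemma~\ref{thm:ISDF} across successive intervals. On $[t_0, t_1]$, Lemma~\ref{thm:ISDF} directly yields
\begin{equation*}
\|\xi(x_1,t) - \xi(x_2,t)\| \leq \|x_1 - x_2\| e^{a[1](t-t_0)} + M[1] e^{a[1](t-t_0)} \int_{t_0}^{t} \|u_1(\tau) - u_2(\tau)\|\, d\tau,
\end{equation*}
which matches the inductive definition of $\alpha(x_1,x_2,u_1,u_2,t)$ on the first interval. For $t \in (t_{i-1}, t_i]$ with $i \geq 2$, the inductive invariant from the first paragraph guarantees $\xi(x_1,t_{i-1}), \xi(x_2,t_{i-1}) \in B_{\Delta_{i-1}}(R_{i-1})$, so Lemma~\ref{thm:ISDF} applies on $[t_{i-1}, t_i]$ with the corresponding $a[i], M[i]$. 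Applying it with initial states $\xi(x_1, t_{i-1}), \xi(x_2, t_{i-1})$ gives exactly the recursive form of $\alpha$, so by induction on $i$ we obtain $\|\xi(x_1,t) - \xi(x_2,t)\| \leq \alpha(x_1,x_2,u_1,u_2,t)$ for all $t \in [0,T]$.

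Finally, I need to verify that $\alpha$ qualifies as a $(K, \U(\I))$-local IS discrepancy function in the sense of the inductive definition: it is defined piecewise with the correct recursion, its constants $a[i], M[i]$ are finite (the eigenvalue bound is finite by the same argument as in Theorem~\ref{theorem: main}, and $\|J_u\|$ is bounded over the compact set $S \times \I$ by continuity), and $\alpha$ is continuous in all arguments with $\alpha \to 0$ as $\|x_1 - x_2\| \to 0$ and $\|u_1 - u_2\|_{L^1} \to 0$.

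The main obstacle is bookkeeping: the input-difference integral accumulated over previous intervals propagates through the next interval multiplied by $e^{a[i](t - t_{i-1})}$, and one must be careful that the recursive definition of $\alpha$ already absorbs this propagation correctly (so that no separate global integral appears). The reach-set invariant also has to be stated with $\|u_1 - u_2\| \leq l$ used to compute $\Delta_i$, since Lemma~\ref{thm:ISDF} is applied to arbitrary input signals in $\U(\I)$ rather than a fixed pair; this is exactly what Lemma~\ref{lemma:islip} is designed to supply, so the two lemmas fit together cleanly.
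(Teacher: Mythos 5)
Your proposal is correct and follows essentially the same route the paper intends: the paper leaves the proof of this theorem implicit (``Using Lemma~\ref{thm:ISDF} to get the coefficients $a[i]$ and $M[i]$ in each time interval\dots''), and your interval-by-interval chaining of Lemma~\ref{thm:ISDF}, supported by the reach-set invariant built from Lemma~\ref{lemma:islip} and the modified $\Delta$-update, is exactly the argument that mirrors the proof of Theorem~\ref{theorem: main} in the LDF case. Your closing remarks on the recursion absorbing the propagated input integral and on bounding the inputs by $l=dia(\I)$ for the invariant are the right bookkeeping points and are consistent with the paper's construction.
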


This theorem enables us to compute the $(K,\U(\I))$-local IS discrepancy function for each subsystem $A_i$. Although in the original definition we assume the IS discrepancy function is valid for any input signals $u_1,u_2 \in \U$, in practice $A_i$ can only take $A_j$'s outputs or states as inputs, which is bounded. Thus, \cite{HuangM:HSCC2014} can still use $(K,\U(\I))$-local IS discrepancy function computed by this approach. Furthermore, the $(K,\U(\I))$-local IS discrepancy function here can over-approximate the reachset of the systems in \eqref{eqn:system-input} with the input $u$ being chosen nondeterministically in some compact set.
 
\section{Experimental Evaluation}
\label{sec:experiments}
We have implemented the verification algorithm of Figure~\ref{fig:algo1} and the $\findldf$ subroutine both with and without coordinate transformation in Matlab. The implementation and the examples are available from~\cite{implementation}. For simulation we use Matlab's built-in ODE solver. 
The Jacobian matrix, an upper bound of the Lipschitz constant are given as inputs. In addition, the function to do the term-wise maximization of the error matrix is also given as inputs (see Section \ref{sec:analysis}).
We use the absolute error for ODE solver as the error bounds for simulation. 
%
The results presented here are based on experiments performed on an Intel Xeon V2 desktop computer.

\subsection{Comparison with other tools}
We compare the performance of our algorithm with two other tools, namely, Flow*~\cite{chen2013flow} and HyCreate~\cite{hycreate}, for safety verification problem of nonlinear dynamical systems. We use seven benchmarks which are shown in Table~\ref{tab:benchmarks} with time bound $T=10s$. 
Flow* uses Taylor models for approximating reachtubes from a set of initial states.
Currently, it returns ``Safe'' or ``Unknown'', but not ``Unsafe''. 
HyCreate uses the face-lifting approach of~\cite{dang1998reachability} and provides a intuitive interface for creating models. 

Vanderpol, CoupledVanderpol, JetEngine, and Brusselator are commonly used, low-dimensional, nonlinear benchmarks. Sinusoidal tracking \cite{sharma2008design} is a 6 dimensional nonlinear designed as a frequency estimator.
The  Lorenz Attractor (row 7) is a well known chaotic dynamical system. Robot arm is a 4 dimensional nonlinear system described in~\cite{angeli2000characterization}.
The Helicopter is a high dimension linear model of a helicopter system from~\cite{frehse2011spaceex}. 

We have implemented verification algorithmwith and without coordinate transformation.
Columns (\#SimO) and (LDFO(s)) show the number of simulations and running time of our algorithm (Figure \ref{alg:ComputeLDF}) without coordinate transformation. In comparison, Columns (\#Sim) and (LDF) are the results with coordinate transformation. Coordinate transformation provides tighter bounds, so the number of simulations and running time decrease under the same environment (i.e. same initial sets and unsafe sets). In row 10 and 11, we increase the time bound of the fixed-wing model to $T=50$ and $T=100$ respectively and the results show that the algorithm scales reasonably for longer time horizons. 
Flow* and HyCreate generate a single over-approximation of the reachtube from the initial set independent of the safety property. While our algorithm will refine the initial sets when the reachtube intersects with the unsafe set. In all of these benchmarks, we make the unsafe set close to the reachtubes, to make the models safe yet it needs a lot of refinements to arrive at that conclusion.
Overall, the proposed approach with coordinate transformation outperformed others in terms of the running time,  especially in high dimensional benchmarks. The ``N/A" in the table means the algorithm timed out at 30 minutes.
Of course, our implementation requires the users to give the symbolic expression of the Jacobian matrix and term-wise maximization functions, while Flow* and HyCreate just needs the differential equations. 
Moreover, our implementation currently handles only nonlinear dynamical systems, and both Flow* and HyCreate can handle hybrid systems.
\begin{sidewaystable}[htbp]
\small
  \centering
  \caption{Safety verification for benchmark examples. dim: dimension of the model; $\delta$: diameter of the initial set; $\mathbb{U}$: unsafe set; \#Sim: number of simulations with coordinate transformation; LDF: running time of our implementation (with coordinate transformation) in seconds; \#SimO: number of simulations using algorithm in Figure \ref{alg:ComputeLDF}; LDFO: running time of algorithm in Figure \ref{alg:ComputeLDF}(without coordinate transformation)in seconds.} 
    \begin{tabular}{c|c|c|c|c|c|c|c|c|c|c}
    \hline
    &example & dim & $\delta$ & $\mathbb{U}$ & \#Sim & LDF(s)  & \#SimO & LDFO(s) & flow*(s) & HyCreate(s) \\
    \hline
    1&Vanderpol & 2     & 0.5   & x$>$2.0 & 9     & 0.378 & 61& 2.01& 11.2  & 2.776 \\
    2&Brusselator & 2     & 0.5  & x$>$1.3 & 21    & 1.01  & 85& 2.79&11.8  & 1.84 \\
    3&Jet Engine & 2        & 0.4   & x$>$2.0 & 5    & 0.353 & 61 & 1.97& 8.74  & 5.54 \\
    4&Robot arm & 4     & 0.5   & x$>$2.5 & 81    & 4.66 &1159&47.9& 169   & $>$300 \\
    5&CoupledVanderpol & 4     & 0.5   & x$>$2.5 & 41    & 2.21  &1353&54.2& 93    & 49.8 \\
    6&Sinusoidal Tracking& 6 & 0.5 & x$>$10 & 185 & 13.2 &753& 97.0& 258&  $>$300\\
    7&Lorenz Attractor & 3 & 0.02& x$>$1e4 & 570 & 13.99 &3105&72.0& 53.4 & N/A \\
\hline
\hline

    8&Fixed-wing UAV (T=10)   & 7 & 3  & x$>$ 39 & 321 & 20.8 &N/A&N/A& N/A & N/A \\
    9&Helicopter & 28 & 0.02 & x$>$4 & 585 & 67.7 &N/A&N/A& N/A & N/A\\
    \hline
    \hline
    10&Fixed-wing UAV (T=50)   & 7 & 3  & x$>$ 39 & 321 & 99.8 &N/A&N/A& N/A & N/A \\
    11&Fixed-wing UAV (T=100)   & 7 & 3  & x$>$ 39 & 321 & 196&N/A&N/A & N/A & N/A \\
    \hline
    \end{tabular}%
  \label{tab:benchmarks}%
\end{sidewaystable}%

\subsection{Properties of LDF}
We explore the behavior of the algorithm with respect to changes in the relative positions of the initial set and the unsafe set. 
We use the nonlinear model of the Robot arm system. 
We fix the point $[1.5,1.5,0,0]$ as the center of the initial set and $T=10$ seconds as the time bound,
and vary the diameter of the initial set ($\delta$) and the unsafe set ($\mathbb{U}: \theta > c$), where $\theta$ is the angle of the arm. 
The number of simulations used by the algorithm with coordinate transformation (\#Sim),
the diameter of the reach tube at the final time $T$  (dia), and 
the total running time (RT) are shown in Table~\ref{tab:property}. 

From the first 5 rows in the Table, we see the expected behavior that for a fixed unsafe set, the diameter of the Reachtube decreases with decreasing $\delta$.
This corresponds to the property that the discrepancy function $\beta(x,x',t)$ goes to $0$ as the initial points $x \rightarrow x'$, and therefore the error in the reachability computation decreases monotonically with the  
diameter of the initial set. Rows $4$ and $6$-$9$ show that if we fix the size of the initial set, then as the unsafe set comes closer to the actual reachtube, the number of simulations increases and therefore the running time increases until the system becomes unsafe. 
As more refinements are made by the algorithm, the accuracy (measured by the diameter of the reachtube) improves.
Similar trend is seen in rows $10$-$12$, the algorithm will need more refinements to find a counter example that shows unsafe behavior, if the unsafe set is close to the boundary of the reachtube.
\begin{table}[htbp]
\small
  \centering
  \caption{Safety verification for a robot arm with different initial states and unsafe sets. safety: safety result returned by verification algorithm;}
    \begin{tabular}{c|c|c|c|c|c|c}
    \toprule
     & $\delta$ & $\mathbb{U}$ & saftey & \#Sim & dia & RT(s) \\
    \hline
     1 &  0.6   & $\theta >$3     & safe  & 17    & 5.6e-3 & 0.948 \\
     2 &  0.4   & $\theta >$3     & safe  & 9     & 3.6e-3 & 0.598 \\
     3 &  0.3   & $\theta >$3     & safe  & 9     & 2.6e-3 & 0.610 \\
     4 &  0.2   & $\theta >$3     & safe  & 5     & 1.8e-3 & 0.444 \\
     5 &  0.1   & $\theta >$3     & safe  & 1     & 1.5e-3 & 0.271 \\
     6 &  0.2   & $\theta >$2.5   & safe  & 9     & 1.7e-3 & 0.609 \\
     7 &  0.2   & $\theta >$2.18  & safe  & 17    & 1.4e-3 & 0.933 \\
     8 &  0.2   & $\theta >$2.17  & safe  & 29    & 1.0e-3 & 1.429 \\
     9 &  0.2   & $\theta >$2.15  & safe  & 161   & 9.2e-4 & 6.705 \\
     10 & 0.2   & $\theta >$2.14  & unsafe & 45    & N/A   & 1.997 \\
     11 & 0.2   & $\theta >$2.13  & unsafe & 35    & N/A   & 1.625 \\
     12 & 0.2   & $\theta >$2.1   & unsafe & 1     & N/A   & 0.267 \\
    \bottomrule
    \end{tabular}%
  \label{tab:property}%
\end{table}%

Next, we explore the behavior of the algorithm (with coordinate transformation) with large initial sets. We use the $7$ dimensional model of a fixed-wing UAV. 
The initial sets are defined as balls with different radii around a center point $[30,980,0,125,0,0,30.4]$ and $\delta$ in the first column is the diameter of the initial sets. The unsafe set is defined as $H>c$, where $H$ is the thrust of UAV.
The time horizon is fixed at $T =10$ seconds. As shown in Table~\ref{tab:initial-size}, our algorithm can handle large initial set and high dimension systems. Although it may need many simulations (24001 covers), the algorithm terminates in 30 mins.
All the results of this table are safe.
\begin{table}[htbp]
\small
  \centering
  \caption{Safety verification for a fixed-wing UAV with large initial sets.}
    \begin{tabular}{c|c|c|c|c}
    \toprule
    &$\delta$ & $\mathbb{U}$ & \#Sim & RT(s) \\
    \hline
    1&50    & $H>400$  & 24001 & 1518 \\
    2&46    & $H>400$  & 6465  & 415 \\
    3&40    & $H>400$  & 257   & 16.33 \\
    4&36    & $H>400$  & 129   & 8.27 \\
    5&20    & $H>400$  & 1     & 0.237 \\
    \bottomrule
    \end{tabular}%
  \label{tab:initial-size}%
\end{table}%
\section{Related Work}\label{sec:related}
\label{sec:related}

Simulation based verification has been studied in several papers recently \cite{donze2007systematic,althoff2008reachability,dang2008sensitive,julius2009trajectory}. In \cite{donze2007systematic} the authors introduce a general simulation based method for proving safety of arbitrary continuous systems. The novelty of their approach consist in the use of sensitivity analysis, where the {\em sensitivity matrix \/} with respect to initial state $x_0$ at time $t$ is defined as $s_{x_0} \triangleq \frac{\partial \xi(x_0,t)}{\partial x_0}$.  It is shown that $\dot s_{x_0}(t) = J_f(x_0,t)s_{x_0}(t)$ and $s_{x_0}(t)$ can be solved by efficient solvers. Then $\|s_{x_0}(t)\|\delta$ is used to bound the distance $\|\xi(x,t)-\xi(x_0,t)\|$ for $x \in B_{\delta}(x_0)$ at time $t$. It is shown that this upperbound holds for linear time varying systems. For general nonlinear systems, $\|s_{x_0}(t)\|\delta$ has a quadratic error term with respect to $\delta$ that requires further analysis.  Thus, this technique is sound for linear system but does not provide any formal guarantees for nonlinear systems (\cite{donze2007systematic}, page 13). In \cite{dang2008sensitive} this technique is extended to nonlinear systems subject to disturbances as inputs and uncertainty in the initial conditions to obtained an approximation that ignores the higher order terms. In contrast, in Section \ref{sec:DF} and Section \ref{sec:ISDF} we have provided a strict over-approximation of Lipschitz continuous systems with respect to uncertainty in the initial conditions and uncertainty in the input signals. In \cite{julius2009trajectory}, the authors provide several approaches to capture the upperbound of the distance between two trajectories for linear systems and some polynomial systems.

In \cite{han2013towards} the authors present a convenient implementation of sensitivity analysis in the Simulink software. Again, the trajectory sensitivity matrix can only be used as a linear approximation for a {\em perturbed trajectory\/} , instead of over-approximation of the reachset. In \cite{althoff2008reachability} the authors provide a different approach by linearizing the nonlinear system locally, and bounding the linearization error by Lagrange remainders. The original definition of discrepancy function can be seen as a generalization of the incremental stability \cite{angeli2002lyapunov}. The incremental Lyapunov  function  can be used as discrepancy function when a system is incrementally stable. An incremental Lyapunov function-based approach is used in \cite{girard2010approximately}. Here the authors go much further and construct a finite symbolic model that is approximately bisimilar to the original switched system. Our approach bypasses the incremental stability requirement by focusing on bounded time analysis.

Contraction  in \cite{slotine98} is defined as the region in which the eigenvalues of the symmetric part of the Jacobian is uniformly negative. The authors use ``virtual displacement" to get the result, while we get the upperbound of the eigenvalues of the symmetric part of the Jacobian directly from the generalized mean value function. Contraction metrics introduced in \cite{slotine98} is also used in \cite{DMVemsoft2013} to perform sound and relative complete analysis of nonlinear systems.




\section{Conclusions and Future Work}
In this paper, we present an algorithm $\findldf$ to compute local discrepancy functions, which is an upperbound of the distance between trajectories starting from an initial set. The algorithm computes the rate of trajectory convergence or divergence for small time intervals and gives the rate as coefficients of a continuous piece-wise exponential function. The local discrepancy we compute satisfies the definition of discrepancy function, so the verification algorithm using $\findldf$ as a subroutine is sound and relatively complete. We also provide a coordinate transformation method to improve the estimation of rates. Furthermore, we extend the algorithm to compute input-to-state discrepancy functions. 

In the future, we plan on using more rigorous ODE solvers like \cite{capd} and embedding the algorithm in verification tools like C2E2 \cite{DMVemsoft2013} for safety verification of hybrid systems.


\bibliographystyle{abbrv}
\bibliography{citation,sayan1}

\newpage
\appendix
\section{Appendix: Proofs of Lemmas}
Proof of Lemma \ref{lemma:highdimensionmean}:

In this proof, the $i$'s in subscript correspond the the $i^{th}$ components of the vector functions.
For any $t \in [0,1], i \in \{1,\dots,n\}$, we define  $g_i(t):=f_i(x+tr)$.
Then we have 
\begin{eqnarray}\label{eqn:int}
f_i(x+r)  - f_i(x) &=& g_i(1)-g_i(0) = \int_{0}^{1}{\frac{dg_i(t)}{dt}dt}. 
\end{eqnarray}
Using the chain rule of gradient, we have
\begin{eqnarray} \label{eqn:derivative}
\frac{dg_i(t)}{dt}&=&\left. \nabla f_i(u)\right| _{u=x+tr} \cdot \frac{d (x+tr)}{d t} \nonumber \\
&=&\left. \nabla f_i(u)\right| _{u=x+tr} \cdot r= \sum_{j=1}^{n}{\left.\frac{\partial f_i(u)}{\partial u_j}\right|_{u=x+tr}r_j},
\end{eqnarray}
where $\nabla f_i (u) = [\frac{\partial f_i(u)}{\partial u_1},\frac{\partial f_i(u)}{\partial u_2},\dots,\frac{\partial f_i(u)}{\partial u_n}]$ is the gradient of function $f_i$. Substituting \eqref{eqn:derivative} in \eqref{eqn:int}, we have:
\begin{eqnarray*}
f_i(x+r)  - f_i(x) &=&  \int_{0}^{1}{\left(\sum_{j=1}^{n}{\left.\frac{\partial f_i(u)}{\partial u_j}\right|_{u=x+sr}r_j}\right)} ds \nonumber \\
&=& \sum_{j=1}^{n}{\left( \int_{0}^{1}{\left.\frac{\partial f_i(u)}{\partial u_j}\right|_{u=x+sr}ds} \right) r_j}. \nonumber \\
\end{eqnarray*}
Since $J_f(x+sr)$ is the matrix consisting of the components of $\left.\frac{\partial f_i(u)}{\partial u_j}\right|_{u=x+sr}$, the lemma holds.

Proof of Theorem \ref{thm:lamd_ordering}.

This theorem is established by the minimax characterization of the eigenvalues. Let $\tilde A = A+E$, and let $\lambda_i(A), \lambda_i(E),\lambda_i(\tilde A)$ denote the eigenvalues of $A,E$ and $\tilde A$ respectively , where all three sets are arranged in non-increasing order. By the maxmin therorem we have 
$$
\lambda _k(\tilde A) = \min_{\dim \mathcal{V}=n-k+1} \left( \max_{0 \neq v \in \mathcal {V}} \rho_{\tilde A} (v)\right)
$$
Which can also be written as

\begin{eqnarray*}
\lambda_k(\tilde A)=\min \max (x^T \tilde A x) \\
x^T x =1, p_i^T x =0 (i =1,2,3,\dots,k-1)
\end{eqnarray*}
Hence, if we take any particular set of $p_i$, we have for all corresponding $x$,
\begin{equation}\label{inequation for sum of two matrix 1}
\lambda_k (\tilde A) \leq \max(x^T \tilde A x) = \max (x^T A x + x^T E x).
\end{equation}

If $U^T A U = \Lambda = \mbox{diag}(\lambda_i(A))$ and $U$ is the orthogonal matrix, then if we take $p_i = Ue_i$ the relations to be satisfied are
$$
0 = p_i^T x = e_i^Ty (i = 1,2,\dots, k-1)
$$

With this choice of the $p_i$ then the first $k-1$ components of $y$ are zero, and from equation \eqref{inequation for sum of two matrix 1} we have
\begin{equation}\label{inequation for sum of two matrix 2}
\lambda_k(\tilde A) \leq \max (x^T A x + x^T E x) \leq \max (\sum_{i=k}^{n}{\lambda_i(A)y_i^2} + x^T E x)
\end{equation}
However,
\begin{equation}
\sum_{i=k}^{n}{\lambda_i(A) y_i^2} \leq \lambda_k(A)
\end{equation}
while
\begin{equation}
x^T E x \leq \lambda_1(E)
\end{equation}
for any $x$.
Hence the expression in brackets of equation \eqref{inequation for sum of two matrix 2} is not greater than $\lambda_k(A)+\lambda_1(E)$ for any $x$ corresponding to this choice of the $p_i$. Therefore its maximum is not greater than $\lambda_k(A)+\lambda_1(E)$ and we have
\begin{equation}\label{inequation for sum of two matrix 3}
\lambda_k(\tilde A) \leq \lambda_k(A)+\lambda_1(E)
\end{equation}
Since $A = \tilde A + (-E)$ and the eigenvalues of $-E$ in non-increasing order are $-\lambda_n(E),-\lambda_{n-1}(E),\dots,-\lambda_1(E)$, and application of the result we have just proved gives
\begin{equation}\label{inequation for sum of two matrix 4}
\lambda_k(A) \leq \lambda_k(\tilde A) + (-\lambda_n (E))~~~
\mbox{or}~~~
\lambda_k(\tilde A) \geq \lambda_k(A) + \lambda_n(E)
\end{equation}

Thus we have
$$
\lambda_k(A)+\lambda_n(E) \leq \lambda_k(A+E) \leq \lambda_k(A)+\lambda_1(E)
$$
The relations \eqref{inequation for sum of two matrix 3} and \eqref{inequation for sum of two matrix 4} imply that when $E$ is added to $A$ all of its eigenvalues are changed by an amount which lies between the smallest and greatest  of the eigenvalues of $E$. Note that we are not concerned here specifically with small perturbations and the results are not affected by multiplicities in the eigenvalues of $A,E$ and $A+E$.


\end{document}